\documentclass[pra, twocolumn, superscriptaddress, a4paper, article]{revtex4-1}

\usepackage{amsmath,amssymb,amsthm}
\usepackage{quantum}
\usepackage{color}
\usepackage{subcaption}
\usepackage{tikz}
\usepackage{hyperref}

\newtheorem{lemma}{Lemma} 
\newtheorem{corollary}{Corollary}

\newtheorem{theorem}{Theorem}

\begin{document}
\title{Spectral properties for a family of two-dimensional quantum antiferromagnets}

\author{Andrew S. Darmawan}
\affiliation{D\'epartement de Physique, Universit\'e de Sherbrooke, Qu\'ebec, Canada}
\affiliation{Centre for Engineered Quantum Systems, School of Physics, The University of Sydney, Sydney, NSW 2006, Australia}
\author{Stephen D. Bartlett}
\affiliation{Centre for Engineered Quantum Systems, School of Physics, The University of Sydney, Sydney, NSW 2006, Australia}
\date{21 January 2016}

\begin{abstract}
    We study the spectral properties of a family of quantum antiferromagnets on two-dimensional (2D) lattices.   This family of models is obtained by a deformation of the well-studied 2D quantum antiferromagnetic model of Affleck, Kennedy, Lieb and Tasaki (AKLT); they are described by two-body, frustration-free Hamiltonians on a three-colourable lattice of spins.  Although the existence of a spectral gap in the 2D AKLT model remains an open question, we rigorously prove the existence of a gap for a subset of this family of quantum antiferromagnets.  Along with providing new progress for the gap problem in AKLT-type antiferromagnets in 2D, this result has implications for the theory of quantum computation as it provides a family of two-body Hamiltonians for which the ground state is a resource for universal quantum computation and for which a spectral gap is proven to exist.
\end{abstract}
\maketitle

\section{Introduction}

Quantum antiferromagnetic spin lattice models possess a rich structure, and exhibit many properties of quantum phases that lie beyond the standard Landau classification by symmetry breaking.  In one-dimensional chains, quantum antiferromagnetic models were the subject of  Haldane's conjecture: that there is a gap in the energy spectrum of integer-spin antiferromagnetic chains whereas half-odd-integer-spin chains are gapless.  The analytically-solvable model of Affleck, Kennedy, Lieb and Tasaki (AKLT) for a one-dimensional spin-$1$ chain possesses a spectral gap, a ground state degeneracy determined by the boundary conditions, and a nonlocal `string-like' order parameter \cite{affleck_rigorous_1987, affleck_valence_1988, kennedy_hidden_1992-1}; this model now serves as the canonical example of a one-dimensional spin chain possessing both symmetry-protected topological order~\cite{GuWen2009} and a valence bond solid description.  For higher-dimensional lattices, quantum antiferromagnetic models are expected to possess a similarly rich structure, but classifying these structures is a challenge, including even the rigorous identification of a spectral gap in natural generalisations of the AKLT state to 2D lattices.

Methods from the theory of quantum computation provide a new approach to identifying and classifying the quantum order present in antiferromagnetic spin lattice models~\cite{VerstraeteCirac2004}.  In particular, it has been shown for 1D antiferromagnetic spin chains that their unique type of quantum order can be identified with an infinite localizable entanglement length~\cite{VerstraeteMDCirac2004,CamposVenuti2005,brennen_measurement-based_2008, else_symmetry-protected_2012}.  For the 2D AKLT model and variants on a number of spin-$3/2$ lattices and the spin-2 square lattice, an even stronger result has been proven:  that the ground state of such models possesses an entanglement structure that allows for universal quantum computation through local measurements alone \cite{wei_affleck-kennedy-lieb-tasaki_2011, miyake_quantum_2011, wei_quantum_2013-1, wei_universal_2015-1}. Quantum computation that proceeds by single-particle measurements of a many-body entangled quantum state is called measurement-based quantum computation (MBQC). The universality of the AKLT model for measurement-based quantum computation was proven by demonstrating that its ground states, on certain lattices, can be converted through local operations into a class of entangled states known as \emph{graph states}---states that form the central resource for the theory of measurement-based quantum computation~\cite{briegel_persistent_2001, raussendorf_one-way_2001}.  From the perspective of quantum computing, there is keen interest in identifying and classifying the types of quantum order in such spin lattices that enable quantum computation in this sense \cite{bartlett_simple_2006, griffin_spin_2008, brennen_measurement-based_2008, chen_gapped_2009, cai_universal_2010, wei_affleck-kennedy-lieb-tasaki_2011, miyake_quantum_2011, li_thermal_2011,wei_universal_2015-1, darmawan_measurement-based_2012, else_symmetry_2012-1, fujii_measurement-based_2013}. 

In this paper, we show that methods from quantum information theory, in particular the study of graph states and AKLT states for quantum computation, can be used in proving spectral properties of 2D antiferromagnetic spin models.  In particular, we consider a family of 2D antiferromagnetic spin models obtained by a one-parameter deformation of the 2D AKLT model.  This model was introduced in a previous work~\cite{darmawan_graph_2014-1}, wherein it was shown using numerical methods that there is a region in parameter space where the ground state of such models is universal for quantum computation.  The deformation to the AKLT model we consider is a modification of that of Ref.~\cite{niggemann_quantum_1997}. In contrast to our case, however, the ground state of the model in Ref.~\cite{niggemann_quantum_1997} tends to a N{\'e}el state (rather than a graph state) as the deformation becomes large.  

Here, we prove several rigorous results concerning the spectral properties of our family of antiferromagnets, with a focus on the key question of whether there exists a spectral gap separating the ground state from the first excited state in the thermodynamic limit.  While the AKLT model on which the model is based is not currently known to be gapped, we prove that a subset of this family of models (not including the AKLT point) is gapped provided the ground state is sufficiently close to a graph state.  Our work complements other studies that have numerically (rather than analytically) investigated the gap in other families of antiferromagnets based on the AKLT model~\cite{Ganesh,GarciaSaez}.

One implication of our results is that this family of models has many of the desired properties for quantum computation, in particular that the Hamiltonian is two-body and that there exists a finite region in the parameter space that is both computationally universal and gapped.  Previous works have presented 2D models, based on the AKLT model, that are both gapped and universal for MBQC~\cite{cai_universal_2010, li_thermal_2011}.  However, these models do not possess the full 2D structure of the family of models presented here. The model in~\cite{cai_universal_2010} is unitarily equivalent to a product of 1D chains, and the model in~\cite{li_thermal_2011} can be transformed via a unitary into decoupled regions of finite size.

The paper is structured as follows.  In Sec.~\ref{s:models}, we define the quantum spin lattice models that we investigate:  the AKLT model, the graph state model, and our family of 2D quantum antiferromagnets.  We then investigate the spectral properties of these models in Sec.~\ref{s:spectral}, including a rigorous proof of the existence of a spectral gap for a subset of these 2D quantum antiferromagnets.  We summarise our results in Sec.~\ref{s:conclusion}.  Several technical results are relegated to appendixes.

\section{AKLT models, graph state models, and a family of 2D antiferromagnets}
\label{s:models}

In this section we define AKLT states and graph states in terms of their parent Hamiltonians.  We note that AKLT states and graph states are special in that both are ground states of frustration free Hamiltonians (i.e., both AKLT states and graph states are the simultaneous ground states of all individual terms in their respective Hamiltonians).  In addition, they are describable as projected entangled pair states (PEPS)~\cite{verstraete_matrix_2008} and can be created from product states using constant-depth unitary circuits. 

We then define a class of 2D antiferromagnet models obtained as deformed versions of the AKLT model, generalising the spin-$3/2$ models originally presented in Ref.~\cite{darmawan_graph_2014-1}.  These deformed AKLT models have ground states that interpolate between an AKLT state and a graph state.

\subsection{Parent Hamiltonians for AKLT and graph states}
\label{s:parent_hams}

Here, we review parent Hamiltonians for graph states and AKLT states.  Alternative equivalent definitions of these states in terms of tensor network states are detailed in Appendix~\ref{s:peps_defs}.

A parent Hamiltonian for a state $\ket{\psi}$ of $N$ particles is a local Hamiltonian $H=\sum_{i}h_i$ that has $\ket{\psi}$ as its unique lowest energy eigenstate. By local, we mean that each interaction term $h_i$ acts non-trivially on at most $k$ particles, where $k$ is a constant independent of $N$.  The Hamiltonians that we consider in this paper are also frustration-free, meaning that the ground state of $H$ is the lowest eigenstate of each interaction term $h_i$ individually.  The minimum energy of each $h_i$ term may be set to zero by adding a multiple of the identity, and by replacing each $h_i$ with the projector onto its image we will not change the ground space of the Hamiltonian nor any crucial spectral properties such as the existence of a spectral gap.  Therefore, for simplicity, in this paper we need only consider frustration-free Hamiltonians that are sums of projectors.  

\subsubsection{AKLT states}

The standard AKLT parent Hamiltonian may be defined on an arbitrary graph $G=(V,E)$ as follows. At each vertex $v$ of the graph place a spin-$S_v$ particle, where $S_v=(d_v-1)/2$ and $d_v$ is the degree of vertex $v\in V$. The AKLT Hamiltonian is an antiferromagnetic interaction Hamiltonian given by
\begin{equation}
    H^A:=\sum_{\langle i,j \rangle} P_{ij}^{S_{\rm tot}=S_i+S_j}\,,
    \label{e:akltham}
\end{equation}
where $S_i$ is the spin of particle $i$, and $P^{S_{\rm tot}=S}_{ij}$ is the projection onto the spin-$S$ irreducible representation of particles $i$ and $j$. Defined in this way, $H^A$ has a unique zero-energy eigenstate, which we call the AKLT state $\ket{\rm AKLT}$. Furthermore, the AKLT state is a zero eigenstate of each term in \eqref{e:akltham}, implying that the Hamiltonian is frustration-free. We note that the terms in this Hamiltonian do not commute with each other, and so establishing the existence of a spectral gap for such models is highly non-trivial.  For two-dimensional lattices, it remains an open question in general whether this AKLT Hamiltonian possesses a spectral gap in the thermodynamic limit. 

\subsubsection{Graph states}

A graph state, and its corresponding parent Hamiltonian, may also be defined on an arbitrary graph $G=(V,E)$. In contrast to AKLT states, however, every particle in a graph state is taken to have spin-1/2. (While we do not consider it here, it is possible to generalise to higher spin~\cite{bahramgiri_graph_2006}.) Following the original definitions \cite{briegel_persistent_2001, raussendorf_one-way_2001}, the graph state $\ket{G}$ corresponding to a graph $G$ is defined as the output of a simple quantum circuit
\begin{equation}
    \ket{G}=\left(\prod_{e\in E } {\rm CZ}_e\right)\left(\bigotimes_{v\in V } \ket{+}_v\right)\,,
    \label{e:gs_definition}
\end{equation}
i.e., a qubit in the state $\ket{+}:=\rt{2}(\ket{0}+\ket{1})$ is placed at each vertex, and a controlled-$Z$ gate ${\rm CZ}:=\exp(i\pi\ketbra{11}{11})$ is applied between every pair of vertices connected by an edge. 

In contrast to AKLT states, graph states possess a particularly simple parent Hamiltonian for which the spectral properties are much simpler.  The graph state is the unique ground state of the Hamiltonian 
\begin{equation}
    H^C:=\sum_{i=1}^N \fr{2}(I-K_i)\,,
    \label{e:cluster_ham}
\end{equation}
where $K_i:=X_i \prod_{j\in n_i} Z_j$ are stabilizer generators, $X_i$ and $Z_i$ are Pauli $X$ and $Z$ matrices respectively acting on particle $i$, and where the product is taken over all neighbours $n_i$ of particle $i$. Note that the interaction terms $\{\fr{2}(I-K_i): i=1,\dots, N\} $ pairwise commute. So, unlike the AKLT Hamiltonian, this cluster Hamiltonian is trivially diagonalisable and possesses a unit spectral gap.

\subsubsection{Relationship between AKLT and graph states}

Motivated by physical considerations, one generally views the Hamiltonian $H^A$ as being more realistic than $H^C$, as it involves only two-body interaction terms.  The cluster Hamiltonian $H^C$ involves many-body interactions.  Even for a one-dimensional chain (degree-two), the interaction terms $K_i$ of $H^C$ each act non-trivially on three particles, and the bodiness of the interaction terms increases with the degree of the graph.  Despite these somewhat artificial features, the cluster Hamiltonian $H^C$ has commuting terms and a unit gap, and its simplicity makes it a useful tool for investigating spectral properties of more general models.  

Let $G$ be three-colourable graph and let $\ket{\rm AKLT}$ be the AKLT state defined on this graph.  AKLT states on three-colourable graphs have the desirable property that they can be transformed into graph states on the same graph using a set of single-particle projectors \cite{wei_affleck-kennedy-lieb-tasaki_2011}. We use this property to construct a continuous family of Hamiltonians and their associated ground states that interpolate between these AKLT states and the corresponding graph states.

Define three operators 
\begin{equation}
    P^c:=\ket{S}_c\bra{S}+\ket{-S}_c\bra{-S}\,,
    \label{e:projector_def}
\end{equation}
for $c\in\{x,y,z\}$ where $\ket{{\pm}S}_c$ are spin-$S$ states satisfying $\hat{S}^c\ket{{\pm}S}_c={\pm}S\ket{{\pm} S}_c$ and $\hat{S}^c$ is the single-spin operator for a spin-$S$ particle along the $c$ axis. In other words, the $P^c$ operators are projections onto the two-dimensional subspace spanned by $\ket{{\pm} S}_c$.

Now, let $\{c_i\in \{x,y,z\}:i=1\dots N\}$ be a three-colouring of $G$, i.e., $c_j\ne c_i$ if $i$ and $j$ are neighbours. For each particle $i$ in our system, we define $P^{c_i}_i$ with respect to Eq. \eqref{e:projector_def}, with $S$ taken to be the spin of particle $i$. We will call the image of $P^{c_i}_i$ the logical space $\mathcal{H}_L^i$ of particle $i$ and we will associate the physical spin states $\ket{S}_c$ and $\ket{-S}_c$ with the logical qubit states $\ket{0}$ and $\ket{1}$ respectively. Then the projectors $P^{c_i}_i$ map the AKLT state to a graph state in the sense that
\begin{equation}
    \bigotimes_{i=1}^N P^{c_i}_i\ket{{\rm AKLT}}\propto\ket{G}\,,
\end{equation}
where $\ket{G}$ is a graph state on the graph $G$, where qubit $i$ is encoded into the logical subspace $\mathcal{H}_L^i$ as above. This result was originally derived in \cite{wei_affleck-kennedy-lieb-tasaki_2011}, and we provide an alternative proof using tensor networks in Appendix~\ref{s:local_conversion}. 

\subsection{A family of two-body antiferromagnets}
\label{s:deform_definition}

In this section we will construct a family of deformed AKLT-like antiferromagnet Hamiltonians, originally described in Ref.~\cite{darmawan_graph_2014-1} for the special case of spin-3/2 models, that interpolate between the physically-motivated but more complicated AKLT Hamiltonian $H^A$ and a simpler model with a structure based on graph states.  This Hamiltonian has the appealing property that it has only two-body interactions. 

For each site $i$, we define a one-parameter family of single-particle operators $D_i(\delta)$ acting on site $i$ as
\begin{equation}
    \label{e:def_deformations}
  	D_i(\delta):=(1-\delta)P^{c_i}_i+\delta I \,,
\end{equation}
where $0 \leq \delta\leq 1$. This operator satisfies $D_i(1)=I$ and $D_i(0)=P^{c_i}_i$. Hence the (unnormalised) state
\begin{equation}
    \ket{\psi(\delta)}:=\bigotimes_{i=1}^N D_i(\delta)\ket{\rm AKLT}\,,
    \label{e:state_definition}
\end{equation}
interpolates smoothly between $\ket{\rm AKLT}$ at $\delta=1$ and $\ket{G}$ at $\delta=0$. If we restrict to the domain $\delta>0$, then $D_i(\delta)$ is invertible. Given that $\ket{\rm AKLT}$ is the unique state annihilated by every $P^{S_{\rm tot}=S_i+S_j}_{ij}$, it follows that $\ket{\psi(\delta)}$ is the unique state annihilated by every deformed interaction term
\begin{equation}
    \left[D_i(\delta)\iv \otimes D_j(\delta)\iv\right]P^{S_{\rm tot}=S_i+S_j}_{ij}\left[D_i(\delta)\iv \otimes D_j(\delta)\iv\right]\,.
    \label{e:deformed_interaction}
\end{equation}
We remark that these couplings take the form $ABA$ where $A$ and $B$ are positive operators, thus the couplings themselves are positive. (It is not a simple unitary change of basis, which would leave the spectrum of the Hamiltonian unchanged.)  Note that these couplings have an unbounded spectrum in the region $\delta>0$ with some eigenvalues tending to infinity as $\delta\rightarrow 0$. 

To ensure that the spectral properties of the family of models is not obscured by the diverging spectral behavior of the local couplings, we replace the interaction terms with a set of projectors $\{Q(\delta)_{ij}: (i,j)\in E\}$ where $Q(\delta)_{ij}$ is defined as the projection onto the image of \eqref{e:deformed_interaction}. 

Thus we define a family of two-body Hamiltonians
\begin{equation}
    H(\delta)=\sum_{\langle i,j\rangle} Q(\delta)_{ij}\,,
    \label{e:twobody_gs_ham}
\end{equation}
which has $\ket{\psi(\delta)}$ as a unique ground state. Given that the operators $Q(\delta)_{ij}$ are well-defined for any $\delta>0$, and that $\lim_{\delta\rightarrow 0 } \ket{\psi(\delta)}=\ket{G}$, we conclude that the ground state of $H(\delta)$ can be made arbitrarily close (in terms of fidelity) to $\ket{G}$ by setting $\delta>0$ sufficiently small. 

The interaction terms $Q(\delta)_{ij}$ are defined for any $\delta>0$, however at $\delta=0$ the operators $D(\delta)$ become singular and thus Eq. \eqref{e:deformed_interaction} is not defined. We may nevertheless define the Hamiltonian at $\delta=0$ by taking the limit $Q(0):=\lim_{\delta\rightarrow 0}Q(\delta)$ and letting $H(0):=\sum_{\langle i,j\rangle} Q(0)_{ij}$. This limit exists, and can in principle be computed by applying Gram-Schmidt orthogonalisation to the column span of Eq \eqref{e:deformed_interaction} for non-zero $\delta$, then taking the limit as $\delta\rightarrow0$. $H(\delta)$ is continuous and $Q(\delta)$ have the same rank in the entire interval $\delta\in[0,1]$, however we note that the ground state degeneracy of $H(\delta)$ increases drastically at $\delta=0$. This behaviour is studied in greater detail in the following section. This essentially implies that the ground state of $H(\delta)$ cannot be made exactly equal to a graph state by setting $\delta=0$, despite the fact that it can be made arbitrarily close by setting $\delta$ small and non-zero.

\section{Spectral properties}
\label{s:spectral}

The construction detailed in the previous section provides us with a family of two-body, frustration-free antiferromagnetic Hamiltonians $H(\delta)$ of Eq.~\eqref{e:twobody_gs_ham} that interpolates between a Hamiltonian possessing an AKLT state as its unique ground state for $\delta=1$ and a Hamiltonian possessing a unique ground state that is arbitrarily close to a graph state for $\delta \rightarrow 0$. Like the AKLT Hamiltonian $H^A$, neighbouring interaction terms $Q(\delta)_{ij}$ do not commute, and analysing spectral properties of $H(\delta)$ is non-trivial. In this section, we study spectral properties of the Hamiltonian $H(\delta)$ of Eq.~\eqref{e:twobody_gs_ham} by making use of the spectral properties of the simpler Hamiltonian $H^C$ of Eq.~\eqref{e:cluster_ham}.

In particular, we will study the behaviour of the spectral gap separating the ground state energy and the first excited state. To be precise, for fixed integer $N$, we will define the gap $\Delta_N=E_1-E_0$ to be the difference between the smallest and second smallest eigenvalues of the Hamiltonian with system size $N$. Note that $E_0=0$ due to our Hamiltonian being a frustration-free sum of projectors. We say that a Hamiltonian is gapped if there exists a constant $\Delta>0$ and integer $M$ such that $\Delta_N>\Delta$ for all $N>M$.  

Proving that a model is gapped is hard in general \cite{cubitt_undecidability_2015}. Whether or not the AKLT Hamiltonain $H^A$ is gapped on two-dimensional lattices remains an open question. The family $H(\delta)$ of Hamiltonians contains the AKLT Hamiltonian $H(1)=H^A$, and thus proving that $H(\delta)$ is gapped for all $\delta>0$ is at least as hard as proving a gap for $H^A$.  We use a range of methods to study the spectral gap in the region where $\delta$ is small and the ground state is close to a graph state. 

Note that there are two key parameters determining the size of the gap of $H(\delta)$: the parameter $\delta$ (which reflects the closeness of the ground state to a graph state), and the system size $N$. First we will consider the case where $N$ is fixed and $\delta$ is allowed to vary. We will show that the gap, as a function of $\delta$, approaches zero as $\delta$ approaches zero. In other words, there is a tradeoff between the fidelity of the ground state with a graph state, and the gap of the Hamiltonian. 

While small $\delta$ implies small gap for finite-sized systems, we will then show in Sec.~\ref{s:thermodynamic_limit} that small $\delta$ also guarantees the existence of a gap in the thermodynamic (infinite system size) limit.  In other words, we prove that for $\delta$ fixed and sufficiently small, there exists a constant $\Delta$ independent of system size, for which $\Delta_N > \Delta$ for all $N$.

\subsection{Behaviour of the gap as $\delta \rightarrow 0$ for finite  lattices}
\label{s:finite_gap}

Here we will prove that, for fixed system size, there is a tradeoff between the gap of the Hamiltonian and the fidelity of the ground state with a graph state. We stated this result previously in \cite{darmawan_graph_2014-1}, although did not provide a rigorous proof. 

The Hamiltonians we consider are not related simply to the stabilizer Hamiltonian for a graph state (which is clearly gapped).  In fact, there is no point in the family where the ground state is exactly a graph state.  We show in this section that for any finite sized system the gap shrinks to zero in the limit where the fidelity of the ground state with a graph state tends to one. This result necessarily comes about because we have restricted to two-body Hamiltonians for which certain ground states (e.g., graph states) cannot be reached without closing the gap.

We will first prove a result about a class of graph states which we call \emph{non-trivial} graph states. A graph state is non-trivial if the graph $G$ on which it is defined satisfies the following two properties. Firstly, the degree of each vertex of $G$ is at least two. Secondly, no two distinct vertices in $G$ have an identical set of neighbours. These properties are satisfied on all 2D lattices considered in this paper. We have the following lemma:

\begin{lemma}
    Let $\ket{G}$ be a non-trivial graph state on a graph of spin-$1/2$ particles. Let $A:=\{i,j\}$ be any two sites and $B$ be its complement. Then the reduced density operator $\rho_A={\rm tr}_B (\ketbra{G}{G})$ is $\fr{4}I$. Moreover, the only positive, two-body, spin-$1/2$ Hamiltonian with $\ket{G}$ as a zero energy eigenstate is the zero Hamiltonian.
    \label{t:trivial_gs_ham}
\end{lemma}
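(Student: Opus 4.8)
The plan is to prove both statements from the stabilizer description of $\ket{G}$, obtaining the second as a short corollary of the first. For the reduced state, I would start from the fact that $\ketbra{G}{G}=\prod_{i=1}^N\frac{1}{2}(I+K_i)=\frac{1}{2^N}\sum_{s\in\mathcal{S}}s$, where $\mathcal{S}=\langle K_1,\dots,K_N\rangle$ is the stabilizer group; this holds because the commuting projectors $\frac{1}{2}(I+K_i)$ single out the unique joint $+1$ eigenstate. Tracing out $B$ and using that a Pauli string is traceless unless it is the identity on every site of $B$, every term with support meeting $B$ drops out, leaving
\[
\rho_A=\mathrm{tr}_B\big(\ketbra{G}{G}\big)=\frac{1}{2^{|A|}}\sum_{\substack{s\in\mathcal{S}\\ \mathrm{supp}(s)\subseteq A}} s\big|_A .
\]
With $|A|=2$, proving $\rho_A=\frac{1}{4}I$ is therefore equivalent to showing that the identity is the \emph{only} stabilizer element supported inside $\{i,j\}$.

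The core of the argument is a support analysis of a general element $s=\prod_{k\in T}K_k$, $T\subseteq V$. Since $K_k=X_k\prod_{l\in n_k}Z_l$, the Pauli that $s$ places on a site $m$ is an $X_m$ precisely when $m\in T$ and a $Z_m$ precisely when $|n_m\cap T|$ is odd. Hence $\mathrm{supp}(s)\subseteq\{i,j\}$ forces $T\subseteq\{i,j\}$ together with $|n_m\cap T|$ even for every $m\notin\{i,j\}$. I would then run the case analysis on $T$: the choice $T=\{i\}$ (resp.\ $\{j\}$) requires $n_i\subseteq\{j\}$ (resp.\ $n_j\subseteq\{i\}$), contradicting the hypothesis that every vertex has degree at least two; the choice $T=\{i,j\}$ requires $n_i$ and $n_j$ to coincide away from $\{i,j\}$. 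The main obstacle is dispatching this last case. When $i\not\sim j$ the condition immediately yields $n_i=n_j$, contradicting the non-triviality assumption that no two vertices share a neighbour set. The adjacent case $i\sim j$ is the delicate one and is where I would be most careful: here I would additionally use that the 2D lattices in question are triangle-free, so adjacent vertices have no common neighbours, which together with degree at least two again produces a contradiction. This leaves only $T=\emptyset$, i.e.\ $s=I$, so $\rho_A=\frac{1}{4}I$.

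The second statement then follows quickly. Writing a positive two-body Hamiltonian as $H=\sum_{\langle i,j\rangle}h_{ij}$ with each $h_{ij}\geq 0$ acting on the pair $\{i,j\}$, the first part gives
\[
0=\bra{G}H\ket{G}=\sum_{\langle i,j\rangle}\mathrm{tr}\big(h_{ij}\,\rho_{ij}\big)=\frac{1}{4}\sum_{\langle i,j\rangle}\mathrm{tr}(h_{ij}).
\]
Each trace is nonnegative because $h_{ij}\geq 0$, so every $\mathrm{tr}(h_{ij})=0$; a positive semidefinite operator of vanishing trace is zero, whence $h_{ij}=0$ for every edge and $H=0$. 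I expect the only genuinely nontrivial work to be concentrated in the two-site support analysis of the previous paragraph, with the adjacency subcase being the point at which both non-triviality conditions (and the geometric structure of the lattice) are actually needed.
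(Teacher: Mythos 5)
Your argument follows the same route as the paper's proof: the stabilizer formula $\rho_A=\frac{1}{2^{|A|}}\sum_{\sigma\in S_A}\sigma$, a case analysis showing that only the identity survives among the four candidates $I$, $K_i$, $K_j$, $K_iK_j$, and then a trace argument for the second claim (your $0=\bra{G}H\ket{G}=\frac{1}{4}\sum_{\langle i,j\rangle}\mathrm{tr}\,h_{ij}$ is the same reasoning as the paper's "any positive non-zero term on $A$ has positive energy"). The one place you genuinely diverge is the case $T=\{i,j\}$ with $i$ adjacent to $j$, and your instinct that this is the delicate point is correct: the support condition there is that $n_i$ and $n_j$ agree \emph{outside} $\{i,j\}$, which for adjacent vertices is strictly weaker than $n_i=n_j$, so the stated non-triviality condition (``no two distinct vertices have identical neighbour sets'') does not by itself eliminate $K_iK_j$. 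Indeed the triangle graph satisfies both non-triviality conditions yet has $K_1K_2=Y_1Y_2$ supported on two sites, so the lemma as literally stated needs the stronger hypothesis; the paper's proof quietly substitutes ``identical set of neighbours outside $A$'' at exactly this step, and your write-up makes that gap explicit, which is a genuine improvement in care.

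However, your patch for the adjacent case --- invoking triangle-freeness of the lattices --- does not cover all the lattices the paper uses: the star lattice $(3,12^2)$ is explicitly among the coverable trivalent lattices and contains triangles. What saves the argument there is not triangle-freeness but the correct, weaker condition: two adjacent vertices of a triangle in the star lattice each have a distinct third neighbour, so their neighbourhoods still differ outside the pair $\{i,j\}$. You should replace the appeal to triangle-freeness with the hypothesis that no two vertices, adjacent or not, have identical neighbourhoods outside the pair, and verify it for each lattice; with that substitution your proof is complete and, on this point, more precise than the paper's.
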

\begin{proof}
We will use the fact that the reduced density operator $\rho_A$ of a stabilizer state with stabilizer $S$ is given by 
\begin{equation}
    \rho_A=\fr{2^{|A|}}\sum_{\sigma\in S_A} \sigma\,,
    \label{e:reduced}
\end{equation}
where $S_A\subseteq S$ is the set of stabilizer elements $\sigma\in S$ which act as the identity outside $A$. Eq. \eqref{e:reduced} will hold for any set of particles $A$, however here we will set $A$ to be an arbitrary two-particle region $\{i,j\}$. 

Any element in the stabilizer $\sigma\in S$ can be expressed uniquely as $\sigma=\prod_{l=1}^N K_l^{\gamma_l}$, where $K_l$ are the graph state stabilizer operators defined in Eq. \eqref{e:cluster_ham} and $\gamma_l\in \{0,1\}$ for each $l$. It is clear that any $\sigma\in S$ with $\gamma_l=1$ for $l\notin A$ cannot be an element of $S_A$. This eliminates all but four possible stabilizer elements from membership in $S_A$: $K_i$, $K_j$, $K_i K_j$ and $I$. As we have assumed that the degree of $G$ is at least 2, $K_i$ and $K_j$ must act non-trivially outside $A$, and thus $K_i$ and $K_j$ are not contained in $S_A$. Furthermore, from the assumption that $i$ and $j$ do not have an identical set of neighbours outside $A$, it is clear that $K_i K_j$  is also not contained in $S_A$. Therefore $S_A=\{I\}$ and from Eq. \eqref{e:reduced}, $\rho_A=\fr{4}I$. Thus any positive non-zero interaction term acting on $A$ will have positive, non-zero energy. Since this holds for every pair of particles in the state, the only two-body Hamiltonian with positive interaction terms and zero energy will be the zero Hamiltonian. 
\end{proof}

The above result holds for spin-$1/2$ Hamiltonians, however it does not directly apply to our family of Hamiltonians, which generally involve higher-dimensional particles. We will now generalise to the case where each graph state qubit is encoded in a two-dimensional subspace, called the logical space $\mathcal{H}_L^i$, of a higher-dimensional physical particle. We find that any two-body, frustration-free Hamiltonian with such a graph state in its ground space must act trivially on the logical space of each particle. 
\begin{lemma}
    \label{t:qudit_gs_ham}
    Consider a system of $N$ particles $\otimes_{i=1}^N\mathcal{H}^i$, where the dimension $d_i=2j+1$ of each spin-$j$ system $\mathcal{H}^i$ is at least 2. Let $\ket{G}$ be a non-trivial graph state with $N$ vertices defined such that the $i$-th qubit is encoded into a two-dimensional subspace of the $i$-th physical particle $\mathcal{H}^i_L\subseteq \mathcal{H}^i $. Let $h_{ij}$ be a positive, two-body operator acting on $\mathcal{H}^i\otimes \mathcal{H}^j$ and satisfying $h_{ij}\ket{G}=0$. Then $h_{ij}$ is supported entirely on $(\mathcal{H}_L^i \otimes \mathcal{H}_L^j)^\perp$, i.e. $h_{ij}=P_L^\perp h_{ij} P_L^\perp$ where $P_L^\perp$ is the projector onto $(\mathcal{H}_L^i \otimes \mathcal{H}_L^j)^\perp$. 
\end{lemma}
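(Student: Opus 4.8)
The plan is to deduce the qudit statement from the spin-$1/2$ result of Lemma~\ref{t:trivial_gs_ham}, using only two ingredients: that $\ket{G}$ is supported entirely inside the logical subspace $\bigotimes_k \mathcal{H}_L^k$, and that $h_{ij}$ is positive. Throughout, write $P_L$ for the projector onto $\mathcal{H}_L^i \otimes \mathcal{H}_L^j$, so that $P_L^\perp = I - P_L$ is the projector appearing in the statement. The whole argument is a short positivity estimate once the relevant reduced density operator has been identified.

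First I would compute the reduced density operator of $\ket{G}$ on the pair $A=\{i,j\}$, but regarded as an operator on the \emph{full} physical space $\mathcal{H}^i \otimes \mathcal{H}^j$. Writing the encoding as a product of local isometries $V_k:\mathbb{C}^2\to\mathcal{H}^k$ with image $\mathcal{H}_L^k$, one has $\ket{G}=(\bigotimes_k V_k)\ket{g}$ with $\ket{g}$ the spin-$1/2$ graph state. Applying a local isometry on the traced-out region $B$ leaves the reduced state on $\{i,j\}$ unchanged, so Lemma~\ref{t:trivial_gs_ham} (which gives the maximally mixed $\tfrac14 I$ for $\ket{g}$) yields
\begin{equation}
  \rho_{ij} = {\rm tr}_B\!\left(\ketbra{G}{G}\right) = \tfrac{1}{4}\,(V_i\otimes V_j)(V_i\otimes V_j)^\dagger = \tfrac{1}{4} P_L .
\end{equation}
In words: the pair-reduced state of $\ket{G}$ is maximally mixed on the four-dimensional logical subspace and has no support outside it.

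Next I would invoke positivity. Since $h_{ij}\ge 0$ and $h_{ij}\ket{G}=0$, we get $0=\bra{G}h_{ij}\ket{G}={\rm tr}(h_{ij}\rho_{ij})=\tfrac14\,{\rm tr}(P_L h_{ij} P_L)$. The compression $P_L h_{ij} P_L$ is positive semidefinite and has vanishing trace, hence $P_L h_{ij} P_L = 0$. Passing to the positive square root, $(h_{ij}^{1/2}P_L)^\dagger(h_{ij}^{1/2}P_L)=0$, so $h_{ij}^{1/2}P_L=0$, and therefore $h_{ij}P_L = h_{ij}^{1/2}(h_{ij}^{1/2}P_L)=0$; taking the adjoint gives $P_L h_{ij}=0$ as well. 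Expanding $h_{ij}=(P_L+P_L^\perp)h_{ij}(P_L+P_L^\perp)$ then kills every term containing a factor $P_L h_{ij}$ or $h_{ij}P_L$, leaving exactly $h_{ij}=P_L^\perp h_{ij} P_L^\perp$, which is the claim.

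The substantive content is concentrated in the first step; the positivity argument that follows is routine and uses nothing about the particular graph. Accordingly, the main point to get right is the identification $\rho_{ij}=\tfrac14 P_L$ — specifically, that the logical encoding neither leaks support outside $\mathcal{H}_L^i\otimes\mathcal{H}_L^j$ nor spoils the maximally-mixed conclusion of Lemma~\ref{t:trivial_gs_ham}. This is where the two \emph{non-triviality} hypotheses on $G$ (minimum degree at least two, and no two vertices sharing an identical neighbourhood) enter, inherited directly through Lemma~\ref{t:trivial_gs_ham}.
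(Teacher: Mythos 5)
Your proof is correct and follows essentially the same route as the paper's: both hinge on lifting Lemma~\ref{t:trivial_gs_ham} to the identification $\rho_{ij}=\tfrac14 P_L$ and then concluding that $h_{ij}$ must vanish on the logical subspace. The only (cosmetic) difference is in the last step, where you use the single scalar equation $\bra{G}h_{ij}\ket{G}=0$ together with positivity of $h_{ij}$ and its square root, whereas the paper compresses the vector equation $h_{ij}\ket{G}=0$ with $P_L$ and $P_L^\perp$ and evaluates the resulting Frobenius norms; both arguments are valid and of comparable length.
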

\begin{proof}
    Let $A=\{i,j\}$ be a two particle region. Let $P_L$ be the projection onto $\mathcal{H}_L^i\otimes \mathcal{H}_L^j$. 
Using a decomposition of the identity $I=P_L+P_L^\perp$, we have 
\begin{equation}
    h_{ij}=P_Lh_{ij}P_L+(P_L^\perp h_{ij}P_L + {\rm h.c.}) + P_L^\perp h_{ij}P_L^\perp\,.
    \label{e:decompose}
\end{equation}
From $h_{ij}\ket{G}=0$ we have that $P_L h_{ij}\ket{G}=0$ and $P_L^{\perp}h_{ij}\ket{G}=0$. From Lemma \ref{t:trivial_gs_ham}, we also have that the reduced density operator on the region $A$ satisfies $\rho_A=\fr{4}P_L$. By rearranging ${\rm tr}(P_L h_{ij}\ketbra{G}{G} h_{ij} P_L)=0$ and ${\rm tr}(P_L^\perp h_{ij}\ketbra{G}{G} h_{ij} P_L^\perp)=0$, we obtain $||P_L h_{ij} P_L||_2=0$ and $||P_L^\perp h_{ij} P_L||_2=0$ where $||B||_2:=\sqrt{{\rm tr}(B,B)}$ is the Frobenius norm. Hence only the last term in Eq. \eqref{e:decompose} is non-zero.
\end{proof}

This implies that any frustration-free, two body Hamiltonian acting on qudits with a graph state as an exact ground state has the undesirable property that the ground space contains the entire logical space $\otimes_{i=1}^N\mathcal{H}_L^i$. Thus the ground space of the Hamiltonian is (at least) $2^N$-fold degenerate. From this it follows trivially that our Hamiltonian $H(\delta)$ defined in Eq. \eqref{e:twobody_gs_ham} acting on a system of fixed size $N$ with a non-trival graph state $\ket{G}$ as an approximate ground state, has a gap that shrinks to zero as $\delta$ tends to zero. We prove the result generically for any two-body, frustration-free Hamiltonian with a approximate graph state as a unique ground state.

\begin{theorem}
    Let $H(\delta)=\sum_{\langle i,j\rangle}h_{ij}(\delta)$ be a two-body, frustration-free Hamiltonian acting on $N$ qudits, defined on the interval $\delta\in[0,1]$, where each term $h_{ij}(\delta)$ is a projector that varies continuously in $\delta$. Assume that for $\delta>0$, $H(\delta)$ has a unique ground state $\ket{\psi(\delta)}$ which has the property that $\lim_{\delta\rightarrow 0}\ket{\psi(\delta)}=\ket{G}$, where $\ket{G}$ is a non-trivial graph state with $N$ vertices (in the sense of Lemma \ref{t:qudit_gs_ham}). Then $\lim_{\delta\rightarrow0}\Delta_N=0$.
\end{theorem}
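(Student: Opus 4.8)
The plan is to leverage Lemma~\ref{t:qudit_gs_ham} to show that the limiting Hamiltonian $H(0)$ has a ground space of dimension at least $2^N$, and then to transfer this degeneracy back to small positive $\delta$ using continuity of the spectrum. The whole argument is essentially a limiting corollary of the finite-$N$ structural result already established.

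First I would establish that the exact graph state $\ket{G}$ is annihilated by each term of the limiting Hamiltonian. For every $\delta>0$, frustration-freeness gives $h_{ij}(\delta)\ket{\psi(\delta)}=0$ on each edge $\langle i,j\rangle$. Since each $h_{ij}(\delta)$ varies continuously on $[0,1]$ and $\ket{\psi(\delta)}\to\ket{G}$, taking $\delta\to 0$ yields $h_{ij}(0)\ket{G}=0$, where $h_{ij}(0)$ is again a projector (and hence positive) by the continuity hypothesis. Next I would apply Lemma~\ref{t:qudit_gs_ham} to each local term $h_{ij}(0)$: being a positive two-body operator that annihilates the non-trivial graph state $\ket{G}$, it must be supported entirely on $(\mathcal{H}_L^i\otimes\mathcal{H}_L^j)^\perp$, i.e.\ $h_{ij}(0)$ annihilates every state lying in the logical subspace $\mathcal{H}_L^i\otimes\mathcal{H}_L^j$. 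Consequently every product state in the full logical space $\bigotimes_{i=1}^N\mathcal{H}_L^i$ is a zero-energy eigenstate of every term, hence of $H(0)$. This shows the ground space of $H(0)$ contains $\bigotimes_{i=1}^N\mathcal{H}_L^i$ and is therefore at least $2^N$-fold degenerate. In particular the second-smallest eigenvalue satisfies $E_1(0)=0$, so the gap at $\delta=0$ vanishes.

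Finally, I would invoke continuity of the spectrum. Because $H(\delta)$ is a finite-dimensional Hermitian operator depending continuously on $\delta\in[0,1]$, each ordered eigenvalue $E_k(\delta)$ is a continuous function of $\delta$ (by Weyl's perturbation inequality). Frustration-freeness fixes $E_0(\delta)=0$ throughout, while the previous step gives $E_1(0)=0$; continuity then forces $E_1(\delta)\to 0$ as $\delta\to 0$, whence $\Delta_N=E_1(\delta)-E_0(\delta)\to 0$.

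The main conceptual point—and the only genuine subtlety—is the exchange of limits. Lemma~\ref{t:qudit_gs_ham} is a statement about the \emph{exact} graph state at $\delta=0$, whereas the hypotheses only furnish a unique ground state for $\delta>0$; the ground-space dimension is discontinuous at the endpoint. The argument hinges on using continuity of the projectors to push the frustration-free condition to the limit point, producing the large degeneracy at $\delta=0$, and then using continuity of the spectrum to pull that degeneracy back and conclude that the gap collapses as $\delta\to 0$. No quantitative rate is obtained, nor is one needed, since the statement concerns only the limit.
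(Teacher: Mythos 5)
Your proposal is correct and follows essentially the same route as the paper: push the frustration-free condition to $\delta=0$ by continuity, invoke Lemma~\ref{t:qudit_gs_ham} to obtain the $2^N$-fold degenerate logical ground space of $H(0)$, and then transfer this back to small $\delta$ via continuity of the spectrum. Your final step via Weyl's inequality on the ordered eigenvalues is a slightly more explicit rendering of what the paper phrases as ``zero-energy crossings,'' but it is the same argument.
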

\begin{proof} 
    By continuity, each $h(0)_{ij}$ has $\ket{G}$ as a zero eigenstate. By Lemma \ref{t:qudit_gs_ham}, $H(0)$ contains the $2^N$-dimensional logical space $\otimes_{i=1}^N\mathcal{H}_L^i$ in its ground space. The fact that $H(\delta)$ has a unique ground state while $H(0)$ has a groundspace that is at least $2^N$-dimensional, and the fact that $H(\delta)\rightarrow H(0)$ as $\delta\rightarrow 0 $, implies a large number of zero-energy crossings at $\delta=0$. The existence of ground state energy crossings implies $\lim_{\delta\rightarrow 0}\Delta_N=0$. 
\end{proof}

We have thus shown for any finite-sized system, that if $H(\delta)$ is a two-body, frustration-free Hamiltonian with a ground state that tends to a graph state as $\delta$ tends to zero, then the gap of the Hamiltonian must also tend to zero as $\delta$ tends to zero. We remark that if the bodiness of the Hamiltonian is greater than two, this trade-off is not necessarily observed. For instance, it was shown in Ref.~\cite{darmawan_graph_2014-1} that for three-body parent Hamiltonians in one dimension, the fidelity can be improved arbitrarily without decreasing the size of the gap.

\subsection{Gap in the thermodynamic limit}
\label{s:thermodynamic_limit}

We now prove that the Hamiltonian $H(\delta)$ is gapped in the thermodynamic limit if $\delta>0$ is independent of $N$ and sufficiently small. This statement may seem at odds with the result of the previous section, where we showed that the gap of the finite-sized system tends to zero as $\delta$ goes to zero. The two results can be reconciled by the fact that we are taking two different limits: here we are considering the case where $\delta$ is fixed and $N$ tends to infinity, while in the previous section we considered the case where $N$ was fixed and $\delta$ tends to zero.

Our proof relies on properties of parent Hamiltonians for injective tensor network states.  Injectivity is a generic property of tensor network states on a lattice~\cite{perez-garcia_peps_2008}, and is useful in our proof as it implies the existence of a unique ground state. To prove injectivity, we first perform a blocking and construct a blocked parent Hamiltonian, for which we can prove a spectral gap that is stable for small perturbations in $\delta$ around zero.  After defining tensor network states and blocked parent Hamiltonians, we then prove that the class of two-body antiferromagnetic Hamiltonians $H(\delta)$ is gapped on various lattices for sufficiently small $\delta$ and provide lower bounds on the size of the gapped region for trivalent lattices (i.e., the largest $\delta$ for which we can prove $H(\delta)$ is gapped).

\subsubsection{Tensor network states and PEPS}

The projected entangled-pair state (PEPS) framework provides an efficient description of a class of multipartite quantum states in arbitrary spatial dimensions. Here we provide a brief description of the PEPS framework, as it applies to our problem; for further details, see Ref.~\cite{verstraete_matrix_2008}. 

A tensor is a multi-index array of complex numbers $A_{\alpha_1, \alpha_2, \dots, \alpha_n}$ where each index $\alpha_k$ has some finite dimension $d_k$, which we call the bond dimension. Graphically, we represent a tensor as in Fig.~\ref{f:blocking}(a).  In this graphical framework, when two tensors are connected by an edge, their corresponding indices are contracted.  For the purposes of defining tensor network states, we identify two distinct types of indices.  The first type are physical indices, denoted $\beta_i$, associated with an orthonormal basis $\{ |\beta_i\rangle, i = 1,\ldots d \}$ for each elementary $d$-dimensional quantum system.  The second type are virtual indices, denoted using $\alpha$ indices, which are not associated with physical degrees of freedom but are contracted internally within the tensor network to allow for a non-product entanglement structure.  For clarity, in our tensor network diagrams we will not usually draw an edge for each physical index, instead implying the existence of a physical index on a tensor with a superscript $\beta$ on the tensor label. 

Consider a quantum state defined in the above basis as $\ket{\psi}=\sum_{\beta_1,\beta_2,\dots,\beta_n}\psi^{\beta_1,\beta_2,\dots,\beta_n}\ket{\beta_1,\beta_2,\dots,\beta_n}$.  A projected entangled pair state (PEPS) is a state where $\psi^{\beta_1,\beta_2,\dots,\beta_n}$ is obtained by a contraction of a network of tensors, one tensor for each physical index, as in Fig.~\ref{f:blocking}(a).  Graph states and AKLT states are well studied examples of PEPS.  Appendix~\ref{s:peps_defs} presents a specific standard choice of tensors for describing graph states and AKLT states. 

\subsubsection{Blocking the Hamiltonian}
\label{s:blocked}

When studying Hamiltonians that have PEPS as ground states, it is often simpler to consider a Hamiltonian which has the same ground state as the physical Hamiltonian, but with interaction terms that act on a larger number of particles. Such Hamiltonians, which we will call \emph{blocked Hamiltonians}, share many properties with the physical Hamiltonian, but are often simpler to study because the spectrum of the local terms can be chosen to be very simple. Here, we define a blocked version $H^B(\delta)$ of the parent Hamiltonian $H(\delta)$.  This blocking will assist in our proof of a gap for $H(\delta)$, as we will show that proving a gap for $H(\delta)$ is equivalent to proving a gap for $H^B(\delta)$ for $\delta>0$.  The blocked Hamiltonian will possess an injectivity property that allows us to prove a gap. In addition, $H^B(\delta)$ can be rigorously proven to be gapped not only for $\delta = 0$ but for small $\delta>0$.

\begin{figure}
\centering
    \begin{subfigure}[c]{0.22\textwidth}
        \includegraphics[width=\textwidth]{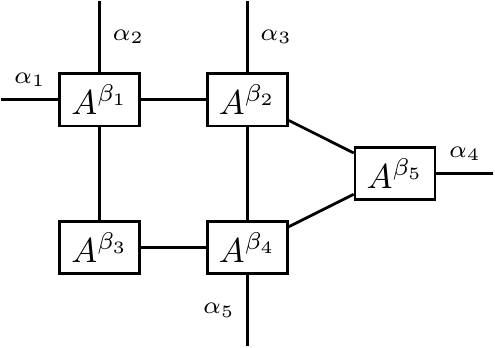}
        \caption{}
    \end{subfigure}\quad
    \begin{subfigure}[c]{0.22\textwidth}
        \includegraphics[width=\textwidth]{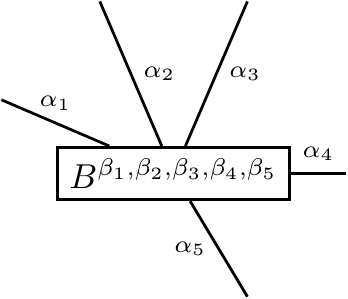}
        \caption{}
    \end{subfigure}
    \caption{A connected region of tensors can be `blocked' to form a single tensor by contracting all internal indices within the region. (a) Tensors in a five-particle region. (b) Blocked tensor.} 
    \label{f:blocking}
\end{figure}

We define the blocked Hamiltonian $H^B(\delta)$ as follows. Let $G$ be a graph, and consider the PEPS describing the state $\ket{\psi(\delta)}$ of Eq.~\eqref{e:state_definition} on this graph that interpolates between the graph state $\ket{G}$ at $\delta=0$ and the AKLT state at $\delta=1$. Given a connected region $R\subseteq V$ of $n$ vertices and $r$ outgoing edges we define the block tensor $B^{\beta_1,\beta_2,\dots,\beta_n}_{\alpha_1, \alpha_2,\dots, \alpha_r}(R,\delta)$ as the tensor obtained by contracting all virtual indices of tensors within the region and leaving outgoing virtual indices uncontracted, as illustrated in Fig.~\ref{f:blocking}(b). This blocking naturally defines a map from virtual, to physical degrees of freedom
\begin{equation}
    \hat{B}^{(R)}(\delta)=\sum_{\substack{\beta_1,\dots,\beta_n\\\alpha_1, \dots, \alpha_r}}
    B^{\beta_1,\dots,\beta_n}_{\alpha_1,\dots, \alpha_r}(R,\delta)\ketbra{\beta_1,\dots,\beta_n}{ \alpha_1,\dots, \alpha_r}\,.
    \label{e:block_map}
\end{equation}
We will say that a region $R$ of particles is injective if the associated map $\hat{B}^{(R)}(\delta)$ is injective. Generally, if the vertices of a tensor network state $\ket{\psi}$ can be partitioned into $M$ disjoint connected regions $V=\cup_{a=1}^M R_a$ such that each set $R_a$ is injective, we say that the state $\ket{\psi}$ is injective. 

In Appendix \ref{s:injective}, we show that a region $R$ for the state $\ket{\psi(\delta)}$ is injective if and only if each vertex $i\in R$ has at most one outgoing edge (i.e., an edge connected to a vertex not in $R$). We will say that $\{R_a:a=1,\dots,M\}$ is an \emph{injective covering} of $V$ if $R_a$ is injective for every $a$, $R_a\cap R_b=\emptyset$ for any $a\ne b$, and $\cup_{a=1}^M R_a=V$. Examples of lattices with injective coverings are illustrated in Fig.~\ref{f:injective_regions}, with the injective region highlighted. 
\begin{figure}
    \begin{subfigure}[c]{0.22\textwidth}
        \includegraphics[width=\textwidth]{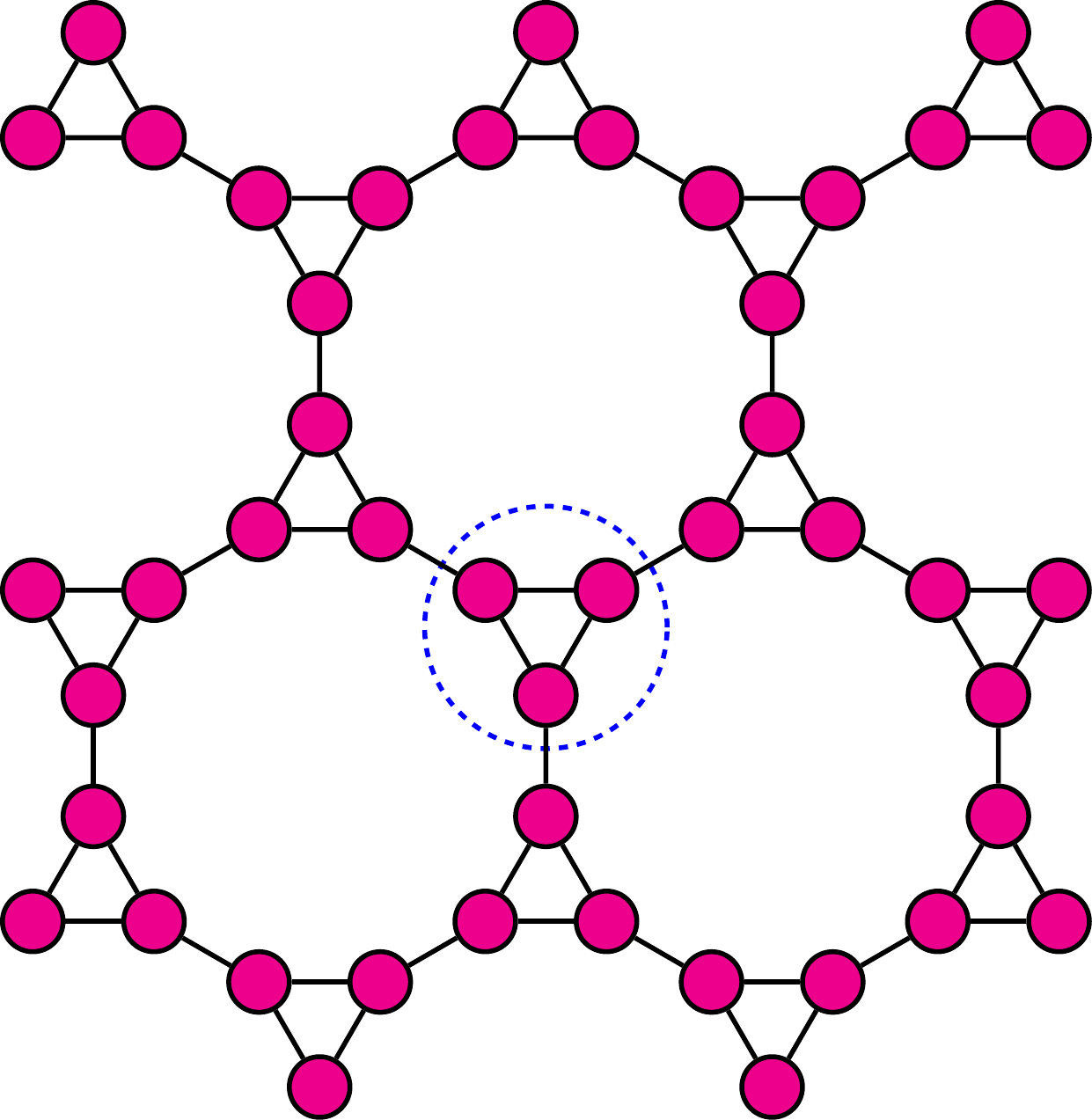}
        \caption{}
    \end{subfigure}
    \begin{subfigure}[c]{0.22\textwidth}
        \includegraphics[width=\textwidth]{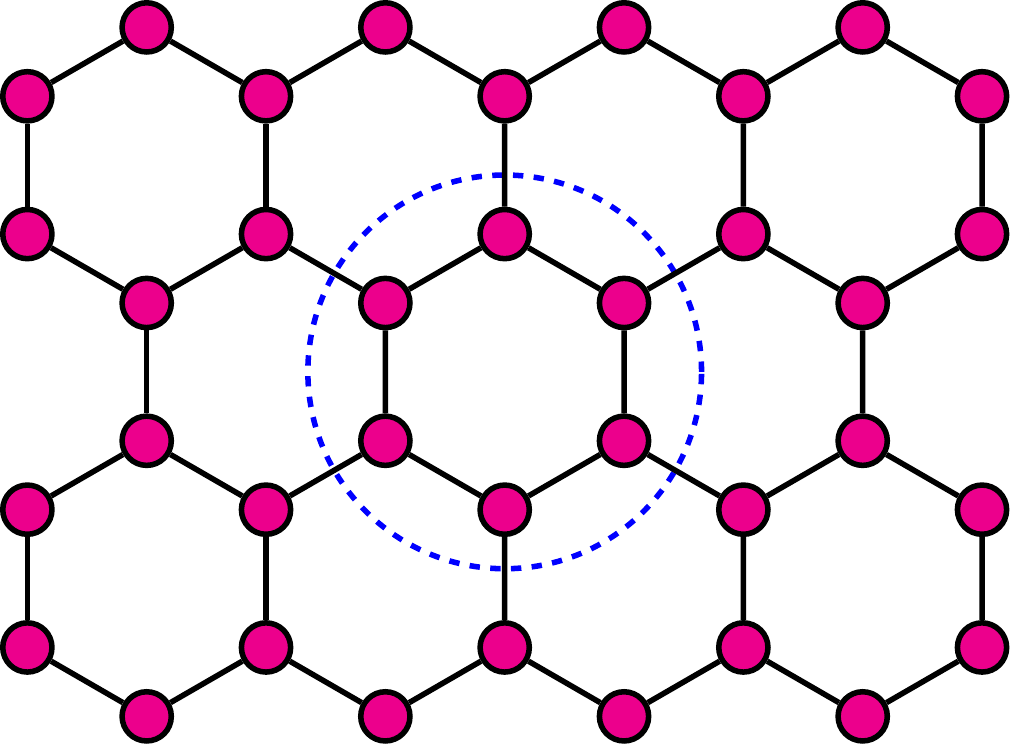}
        \caption{}
    \end{subfigure}    
    \caption{The star lattice (a) and honeycomb lattice (b) with injective regions circled. A region is injective if and only if each particle has at most one outgoing edge (i.e. connected to a particle outside the region). Furthermore, we say these lattices are coverable as both can be decomposed as as a disjoint union of such regions.}
    \label{f:injective_regions}
\end{figure}

Let $E_{a,b}\subseteq E$ be the set of edges in $G$ that connect a vertex in $R_a$ to a vertex in $R_b$. Given a partitioning of the graph into injective regions $\{R_a:a=1,\dots,M\}$, we define a coarse-grained graph $G'=(V',E')$ as follows. Each vertex in $V'$ corresponds to a region $R_a$, and two vertices in $V'$, corresponding to the regions $R_a$ and $R_b$, are connected by an edge in $E'$ if and only if $E_{a,b}$ is non-empty. 

Given the state $\ket{\psi(\delta)}$ defined on a graph with a injective covering, let $\hat{B}^{(a,b)}(\delta)$ be the injective block tensor for the region $R_a\cup R_b$, where $R_a$ and $R_b$ are neighbouring injective regions (where `neighbouring' means that they are connected in $G'$). 

We define the blocked parent Hamiltonian for $\ket{\psi(\delta)}$ as
\begin{equation}
    H^B(\delta):=\sum_{\langle a,b \rangle \in E'} \Pi^{(a,b)}(\delta)\,,
\end{equation}
where $\Pi^{(a,b)}(\delta)$ is the projector onto the orthogonal complement of the image of $\hat{B}^{(a,b)}(\delta)$, and the sum is taken over all neighbouring pairs of regions $R_a$ and $R_b$. Note that this Hamiltonian is not two-body: each term $\Pi^{(a,b)}(\delta)$ will in general act non-trivially on all particles in the region $R_a\cup R_b$. Given that the reduced density operator of $\ket{\psi(\delta)}$ is supported on the image of $\hat{B}^{(a,b)}(\delta)$, it is clear that $\ket{\psi(\delta)}$ is a ground state of $H^B(\delta)$. As is shown in Ref.~\cite{perez-garcia_peps_2008} the injectivity condition further implies that $\ket{\psi(\delta)}$ is the \emph{unique} ground state of $H^B(\delta)$. 

We thus have two Hamiltonians, $H(\delta)$ and $H^B(\delta)$, both possessing $\ket{\psi(\delta)}$ as a unique ground state. The Hamiltonian $H(\delta)$ is two-body while $H^B(\delta)$ is not. A crucial difference between these two Hamiltonians is that $H^B(\delta)$ is well defined at $\delta=0$ with a unique ground state exactly equal to the target graph state $\ket{G}$.  On the other hand, as we showed in Sec. \ref{s:finite_gap}, the two-body Hamiltonian $H(\delta)$ has an exponentially degenerate ground space at $\delta=0$. In the following section, we will use the fact that $H^B(0)$ has a unique ground state and is gapped to prove that $H(\delta)$ is gapped for sufficiently small $\delta$. 

\subsubsection{Restrictions on interaction graph}
\label{s:restrictions}

We have imposed a number of constraints on the interaction graph of the Hamiltonian, which we briefly summarise here. First, we have restricted to three-colourable graphs, for which the methods of Sec.~\ref{s:deform_definition} can be used to project the AKLT state locally to a graph state on the same graph.  To prove a gap in the thermodynamic limit, we will further require injectivity for our blocked Hamiltonian. As described in the previous section we can ensure injectivity by imposing that our graphs have an injective covering. 

As we are primarily interested in lattices (rather than arbitrary graphs) we will impose some further simplifications. We require that there exists an injective covering for the graph $\{R_a: a=1,\dots,M\}$, for which both the number outgoing edges $r$ and the number of particles $n$ in each $R_a$ is a constant independent of $a$ and the system size $N$ and that the block tensor $\hat{B}^{(a)}(\delta)$ for each $R_a$ are all identical up to a unitary acting on the output for all $\delta\ge0$. It would vastly complicate the proof if the injective covering varied with the system size or if the spectral properties of the block tensor were different for different regions.  

We will call a graph satisfying all of the above properties \emph{coverable}. All trivalent Archimedian lattices are coverable: the honeycomb $(6^3)$, the square octagon $(4,8^2)$, the cross $(4,6,12)$ and the star $(3,12^2)$. However, the square lattice (which does not have a injective covering), is not coverable. We will discuss generalisations non-coverable lattices in \ref{s:generalisations}.

\subsubsection{Proof of gap}

Here, we prove the following theorem regarding the existence of a spectral gap for $H(\delta)$, which is the central result of this section. 
\begin{theorem}
    \label{t:gap_infinite}
    Let $H(\delta)$ be the two-body Hamiltonian defined in Eq.~\eqref{e:twobody_gs_ham} on a coverable graph. Then there exists $\delta_c>0$ such that $H(\delta)$ is gapped for all $\delta$ in the open interval $(0,\delta_c)$. 
\end{theorem}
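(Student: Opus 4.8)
The plan is to transfer the gap from the blocked Hamiltonian $H^B(\delta)$ to the two-body Hamiltonian $H(\delta)$, and to establish the gap of $H^B(\delta)$ by perturbing around the exactly solvable point $\delta=0$, keeping every estimate uniform in the system size $N$ (which is precisely where coverability does the work). Throughout, write $\gamma(\cdot)$ for the spectral gap of a frustration-free sum of projectors.

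First I would prove a comparison estimate showing that, for each fixed $\delta>0$, the gap of $H(\delta)$ is bounded below by that of $H^B(\delta)$ up to an $N$-independent factor. Since both Hamiltonians are frustration-free sums of projectors with the same unique ground state $\ket{\psi(\delta)}$ for $\delta>0$, it suffices to establish an operator inequality $H^B(\delta)\le c(\delta)\,H(\delta)$ with $c(\delta)<\infty$, which immediately gives $\gamma(H(\delta))\ge \gamma(H^B(\delta))/c(\delta)$. I would obtain this termwise. For each neighbouring pair of regions, restrict the two-body terms with both endpoints in $R_a\cup R_b$ to that region; because $R_a\cup R_b$ is injective, the intersection property of injective PEPS guarantees that the common kernel of these finitely many two-body projectors coincides with the image of $\hat B^{(a,b)}(\delta)$, i.e.\ with the kernel of $\Pi^{(a,b)}(\delta)$. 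Two positive operators on a finite-dimensional space with the same kernel satisfy $\Pi^{(a,b)}(\delta)\le g(\delta)^{-1}\sum_{(i,j)\subseteq R_a\cup R_b}Q(\delta)_{ij}$, where $g(\delta)>0$ is the smallest nonzero eigenvalue of the restricted two-body Hamiltonian. Coverability ensures all region-pairs are identical up to a local unitary, so $g(\delta)$ is a single $N$-independent constant, and each two-body edge lies in only a bounded number of region-pairs; summing over $\langle a,b\rangle\in E'$ gives $c(\delta)=O(g(\delta)^{-1})$. As a consistency check, $g(\delta)\to0$ as $\delta\to0$ by the finite-size result of Sec.~\ref{s:finite_gap}, which is exactly why this estimate does not contradict $\Delta_N\to0$.

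Next I would show that $H^B(0)$ is gapped uniformly in $N$ by comparing it to the commuting cluster Hamiltonian $H^C$, whose gap is $1$. At $\delta=0$ the unique ground state of $H^B(0)$ is the graph state $\ket{G}$, which is also the unique ground state of $H^C$. Choosing the blocking coarse enough that the support of each stabilizer term $\tfrac12(I-K_i)$ lies inside a single region-pair, I would note that $\tfrac12(I-K_i)$ annihilates the reduced state of $\ket{G}$ on that region-pair, hence is supported on the range of $\Pi^{(a,b)}(0)$, so $\tfrac12(I-K_i)\le \Pi^{(a,b)}(0)$. Summing and using the bounded overlap of region-pairs yields $H^C\le c_2\,H^B(0)$, and therefore $\gamma(H^B(0))\ge 1/c_2>0$, with $c_2$ depending only on the fixed local geometry.

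The main obstacle is the final step: promoting the gap of $H^B(0)$ to a uniform gap of $H^B(\delta)$ for small $\delta>0$. Naive perturbation theory fails, since $\|H^B(\delta)-H^B(0)\|$ grows with $N$. Instead I would use that the projectors $\Pi^{(a,b)}(\delta)$ vary continuously in $\delta$ and, by coverability, do so identically across region-pairs, so $\|\Pi^{(a,b)}(\delta)-\Pi^{(a,b)}(0)\|\le\epsilon(\delta)$ with $\epsilon(\delta)\to0$ uniformly in $a,b$ and $N$. I would feed this into an $N$-independent sufficient condition for a frustration-free gap: either the martingale method, which lower-bounds $\gamma(H^B)$ in terms of the operator norms of overlaps $\Pi^{(a,b)}\Pi^{(b,c)}$ of projectors on successively overlapping region-pairs, or an equivalent finite-size (Knabe-type) local-gap criterion evaluated on a bounded cluster of region-pairs. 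Both conditions hold strictly at $\delta=0$, where $H^B(0)$ is gapped and behaves like the cluster model, and since they involve only finitely many projectors varying continuously in $\delta$, they persist on some interval $(0,\delta_c)$, giving $\gamma(H^B(\delta))\ge\Delta_0>0$ independent of $N$. Combining with the first step yields $\gamma(H(\delta))\ge g(\delta)\Delta_0/(\text{const})>0$ for all $\delta\in(0,\delta_c)$ and all $N$, which is the claimed gap. Verifying that the chosen finite-size or martingale condition is genuinely satisfied at $\delta=0$, and extracting an explicit threshold $\delta_c$, is the delicate part, and is presumably where the per-lattice estimates for trivalent lattices enter.
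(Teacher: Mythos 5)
Your overall architecture coincides with the paper's: a kernel-matching operator inequality relating $H(\delta)$ to the blocked Hamiltonian $H^B(\delta)$ (the paper's Lemma~\ref{t:twobody_to_block}, with your $g(\delta)$ playing the role of $\lambda_{\rm min}(\delta)$), a gap for $H^B(0)$ (Lemma~\ref{t:block_ham_gap}), and a Knabe-type local criterion to push the gap to small $\delta>0$ (Lemma~\ref{t:block_gap_region} and Appendix~D). Your first step and your third step are sound in outline and match the paper. The genuine gap is in your second step and, more importantly, in what your second step is supposed to hand to your third step.

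Concretely: the inequality $H^C\le c_2\,H^B(0)$ does not by itself yield $\gamma(H^B(0))\ge 1/c_2$, because the two operators do not have the same kernel on the physical (qudit) Hilbert space. Each $K_i$ must be understood as acting on the logical subspaces $\mathcal{H}^L_j$, so $\tfrac12(I-K_i)$, extended to the full particle Hilbert spaces, annihilates every state lying outside $\bigotimes_j\mathcal{H}^L_j$; the kernel of $H^C$ is therefore exponentially larger than the one-dimensional kernel of $H^B(0)$, and the operator inequality gives no lower bound on $\langle\psi|H^B(0)|\psi\rangle$ for $\psi$ in that excess kernel. This is patchable (add the single-site projectors onto the non-logical subspaces, which are also dominated by the blocked terms), but the deeper problem is that even a correct global gap for $H^B(0)$ is not the input your third step needs. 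The Knabe/martingale criterion \eqref{e:sufficient_condition} is a condition on pairs of overlapping local projectors, and it must hold \emph{strictly, with $N$-independent room to spare}, at $\delta=0$ before continuity can carry it to $\delta>0$; a global spectral gap for $H^B(0)$ does not imply the local condition exceeds the required threshold. What makes it hold with maximal slack ($\Delta=1$) is that the blocked projectors $\Pi^{(a,b)}(0)$ pairwise \emph{commute} — a statement about the local operators, not the spectrum — which the paper establishes by explicitly disentangling each region pair with the CZ unitaries $W_a$ so that the image of $\hat{B}^{(a,b)}(0)$ becomes a tensor product of $\ket{+}$'s, Bell-type pairs, and free boundary qubits. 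Your phrase ``behaves like the cluster model'' is exactly the missing lemma: you need to prove commutativity (or a quantitative local gap above the Knabe threshold) of the $\Pi^{(a,b)}(0)$, and neither your comparison to $H^C$ nor the global gap it would give you delivers that.
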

The proof of Theorem~\ref{t:gap_infinite} follows from three simple lemmas. We will first show that the blocked Hamiltonian is gapped at $\delta=0$. 
\begin{lemma}
    \label{t:block_ham_gap}
    $H^B(0)$ is gapped. 
\end{lemma}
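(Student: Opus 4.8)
The plan is to exploit the fact that the unique ground state of $H^B(0)$ is the stabilizer state $\ket{G}$, and to show that this forces the blocked terms $\Pi^{(a,b)}(0)$ to mutually commute; a uniform gap then follows immediately from the commuting structure.

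First I would identify each term with the orthogonal complement of a reduced density operator's support. Since $\hat{B}^{(a,b)}(0)$ is injective, its image coincides with the support of the reduced density operator $\rho_{A}$ of $\ket{\psi(0)}=\ket{G}$ on the region $A:=R_a\cup R_b$~\cite{perez-garcia_peps_2008}, so that $\Pi^{(a,b)}(0)=I-P^{(a,b)}$, where $P^{(a,b)}$ projects onto $\mathrm{supp}\,\rho_{A}$. I would then express these support projectors through the stabilizer group. By Eq.~\eqref{e:reduced}, the reduced operator of the graph state on $A$ is proportional to $\sum_{\sigma\in S_A}\sigma$, where $S_A$ is the subgroup of stabilizer elements acting trivially outside $A$, so its support projector is $P^{(a,b)}=\frac{1}{|S_A|}\sum_{\sigma\in S_A}\sigma$, the projector onto the common $+1$ eigenspace of $S_A$. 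Because the graph-state stabilizer group is abelian, every element of $S_A$ commutes with every element of $S_{A'}$ for any two regions $A,A'$, and hence the $P^{(a,b)}$---and therefore the $\Pi^{(a,b)}(0)$---all mutually commute.

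With commutativity established the gap is immediate. A sum of mutually commuting projectors is simultaneously diagonalisable, so every eigenvalue of $H^B(0)$ is a non-negative integer counting the number of terms that equal one on the corresponding common eigenvector. Injectivity and frustration-freeness single out a single eigenvector of energy zero, namely $\ket{G}$, on which every $\Pi^{(a,b)}(0)$ vanishes; any orthogonal common eigenvector is acted on nontrivially by at least one term and so has energy at least one. Therefore $\Delta\ge 1$ uniformly in $N$, and $H^B(0)$ is gapped.

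The one point demanding care is the encoded setting, in which each graph-state qubit lives in a logical subspace $\mathcal{H}_L^i$ of a higher-dimensional particle. Here $P^{(a,b)}$ factorises into the logical projector $\bigotimes_{i\in A}P^{c_i}_i$ and a sum of logical stabilizer elements, and I would need to check that the single-particle projectors $P^{c_i}_i$ commute both with one another and with the logical Pauli operators of neighbouring regions, so that the abelian structure used above survives the encoding. I expect this to be the main---though essentially routine---obstacle; everything else reduces to standard facts about stabilizer reduced density operators and injective PEPS.
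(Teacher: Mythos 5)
Your argument is correct and reaches the same intermediate fact as the paper---that the terms of $H^B(0)$ pairwise commute, whence the integer spectrum, uniqueness of the ground state, and a uniform gap of at least $1$---but it gets there by a genuinely different route. The paper never touches reduced density operators: it applies the disentangling unitaries $W_a$ (products of ${\rm CZ}$ gates internal to each injective region) and uses the tensor-network identity \eqref{e:graph_add_edge} to exhibit the image of each blocked map explicitly as the product \eqref{e:image} of $\ket{+}$ states, two-qubit graph states on crossing edges, and free logical subspaces; commutativity of the conjugated projectors is then read off from this product form, and the explicit form is reused later (in Appendix~\ref{s:gap_region}, the commutativity of the $h_{a,b}(\delta)$ is inherited from exactly this structure). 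You instead identify $\Pi^{(a,b)}(0)$ with $I$ minus the support projector of the graph-state reduced density operator via \eqref{e:reduced} and invoke abelianness of the stabilizer group. This is cleaner and more portable---it works whenever the ground state is a stabilizer state and the terms are support projectors---but it leans on one step you state rather than prove: that ${\rm img}\,\hat{B}^{(a,b)}(0)$ \emph{equals} (not merely contains) ${\rm supp}\,\rho_{R_a\cup R_b}$. This does hold here, because the complement of $R_a\cup R_b$ is itself a union of injective regions, so its block map is injective and the boundary state it induces on the virtual indices is full rank; equivalently, since every vertex of the lattice has at most one outgoing edge from its own region, no two vertices of $R_a\cup R_b$ share an outside neighbour and the cut adjacency matrix has full rank $r$. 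Without that observation the two projectors could differ and you would be bounding the gap of a different (larger) Hamiltonian. The encoded-subspace check you flag at the end is indeed routine: the logical projectors $P^{c_i}_i$ commute with the encoded Pauli operators by construction, so the abelian structure survives. With those two points filled in, your proof is a valid and somewhat more elementary alternative to the paper's.
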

\begin{proof}
    We prove this lemma using the fact that $\ket{\psi(0)}=\ket{G}$, and by showing that this graph state can be locally transformed to a product of Bell pairs via a product of unitaries that act only within injective regions (and not between them). Under this unitary map, the Hamiltonian can be seen to be trivially diagonalisable, commuting and thus gapped. 

Let $G$ be a coverable graph and consider its tensor description, provided in Appendix~\ref{s:gs_peps}.  Let $\{R_a: a=1,\dots,M\}$ be an injective covering of $G$.
Let $W_a:=\bigotimes_{\langle i,j\rangle\in E_a}(CZ\oplus I)_{i,j}$ be a product of $CZ$ gates acting on all pairs of neighboring particles in the region $R_a$ where each $(CZ\oplus I)_{i,j}$ acts as $CZ$  on the logical subspace $\mathcal{H}^L_i\otimes \mathcal{H}^L_j\subseteq \mathcal{H}_i\otimes \mathcal{H}$ of particles $i$ and $j$ and as the identity on the remainder of the space.
Using the identity given in Eq.~\eqref{e:graph_add_edge} on the tensor description of $\hat{B}^{(a,b)}(\delta)$ on two injective regions $R_a$ and $R_b$, we see that the image of $[W_a\otimes W_b]\hat{B}^{(a,b)}(0)$ becomes a simple tensor product. To be precise, let $R^0$ denote the interior of the region $R$, i.e., the set of vertices in $R$ that are not connected to vertices outside $R$, and let $\partial R:=R\backslash R^0$ denote the boundary of $R$. The image of $[W_a\otimes W_b]\hat{B}^{(a,b)}(0)$ is 
\begin{equation}
    \left(\bigotimes_{i\in R_a^0\cup R_b^0}\ket{+}_i\right)\left(\bigotimes_{\langle j,k \rangle\in E_{a,b}} \ket{H}_{jk}\right)\left(\bigotimes_{l\in \partial (R_a\cup R_b)} \mathcal{H}_l^L\right),
    \label{e:image}
\end{equation}
where $\ket{H}_{jk}\in \mathcal{H}^L_j\otimes \mathcal{H}^L_k$ is a two-qubit graph state on particles $j$ and $k$ and $E_{a,b}$ was as previously defined as the set of edges in $G$ that connect region $R_a$ to region $R_b$. 

Let ${\Pi^{(a,b)}}'$  be the projector onto the space orthogonal to the space defined by Eq.~(\ref{e:image}). The product form of Eq.~(\ref{e:image}) immediately implies that the set of operators $\{{\Pi^{(a,b)}}': \langle a, b \rangle \in E'\}$ pairwise commute.  It is also clear that the Hamiltonian $H':=\sum_{\langle a,b \rangle\in E'}{\Pi^{(a,b)}}'$ has a trivial unique ground state consisting of $\ket{H}$ on all connected pairs of vertices that are from separate regions, and $\ket{+}$ on the remaining vertices that are only connected to other vertices within the same region. Hence $H'$ is gapped. Finally, given that $H'$ and $H^B(0)$ are related by conjugation by $\bigotimes_{a=1}^M W_a$ we conclude that $H^B(0)$ is also gapped. 
\end{proof}

Having shown that $H^B(0)$ is gapped, we next show that $H^B(\delta)$ is gapped for a finite interval $(0,\delta_c)$.  This result can be seen to follow quite directly from stability results for canonical parent Hamiltonians~\cite{michalakis_stability_2013}, but we provide a self contained proof of this fact in Appendix~\ref{s:gap_region} based on methods used previously in Refs.~\cite{knabe_energy_1988, schuch_classifying_2011} for frustration-free Hamiltonians. The idea of the proof is to show that a PEPS parent Hamiltonian is gapped when its ground state is sufficiently close to an isometric PEPS (as is the case when $\delta$ is sufficiently small). 
\begin{lemma}
    \label{t:block_gap_region}
    There exists $\delta_c>0$ such that $H^B(\delta)$ is gapped for all $\delta$ in the interval $[0,\delta_c)$. 
\end{lemma}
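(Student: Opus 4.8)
The plan is to upgrade the single finite-dimensional fact just established in Lemma~\ref{t:block_ham_gap}---that the $\delta=0$ blocked Hamiltonian is gapped---into a lower bound on the bulk gap that holds \emph{uniformly in the system size} $N$ for all sufficiently small $\delta$. The difficulty is exactly this uniformity: a naive perturbation-in-$\delta$ argument around $\delta=0$ would produce a lower bound on $\Delta_N$ whose size could degrade as $N\to\infty$, which says nothing about the thermodynamic limit. One quick route is to invoke the general stability theorem for canonical (injective) parent Hamiltonians of Ref.~\cite{michalakis_stability_2013}; for a self-contained argument I would instead follow the local-gap (Knabe-type) philosophy of Refs.~\cite{knabe_energy_1988, schuch_classifying_2011}. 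The target is an operator inequality of the form $H^B(\delta)^2 \ge \gamma\, H^B(\delta)$ with $\gamma>0$ independent of $N$, which, since $H^B(\delta)$ is positive semidefinite with lowest eigenvalue zero, is equivalent to a spectral gap $\gamma$.

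First I would fix a patch geometry: a connected cluster $\Omega$ of bounded diameter in the coarse-grained graph $G'$, together with the associated patch operator $H^B_\Omega(\delta):=\sum_{\langle a,b\rangle \subseteq \Omega}\Pi^{(a,b)}(\delta)$. The coverability hypotheses are what make this tractable: since every region has the same constant size $n$ and constant number $r$ of outgoing edges, and the block tensors $\hat B^{(a)}(\delta)$ agree up to a unitary acting on their outputs, there are only \emph{finitely many} inequivalent patch operators, each acting on a fixed finite-dimensional space independent of $N$. The finite-size criterion then reduces the bulk gap of $H^B(\delta)$ to a lower bound on the gaps of these patch operators, with an explicit threshold that depends only on the patch diameter and not on $N$.

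Next I would verify the local gap condition at $\delta=0$ and propagate it to $\delta>0$. At $\delta=0$ I can reuse the structure from the proof of Lemma~\ref{t:block_ham_gap}: conjugating $\Omega$ by $\bigotimes_{a\in\Omega}W_a$ turns each term into the commuting projector ${\Pi^{(a,b)}}'$ onto the complement of the product space of Eq.~\eqref{e:image}, so every patch Hamiltonian is a sum of pairwise commuting projectors and has gap exactly $1$. Taking the patch $\Omega$ large enough drives the Knabe threshold (which decreases with patch size) strictly below $1$, so the local gap condition holds at $\delta=0$ with strict slack. To extend to $\delta>0$ I would use continuity and compactness: the block tensor $\hat B^{(a,b)}(\delta)$ is injective throughout $[0,1]$ on a coverable graph (Sec.~\ref{s:blocked}, Appendix~\ref{s:injective}), so its image has $\delta$-independent dimension and the complementary projector $\Pi^{(a,b)}(\delta)$ has constant rank and depends continuously on $\delta$. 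Hence each of the finitely many patch operators is a continuous matrix-valued function of $\delta$, its gap is a continuous function of $\delta$, and---since there are finitely many patch types, each with strict slack at $\delta=0$---there is a single $\delta_c>0$ below which all patch gaps remain above the threshold. The finite-size criterion then yields a positive bulk gap on $[0,\delta_c)$.

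The main obstacle I expect is making the finite-size criterion quantitatively correct in this setting: one must (i) state a Knabe-type bound valid for the overlapping projectors $\Pi^{(a,b)}(\delta)$ living on the coarse-grained lattice $G'$, rather than merely for a one-dimensional chain, and (ii) certify that the $\delta=0$ local gap genuinely exceeds the resulting threshold, which forces $\Omega$ to be chosen sufficiently large \emph{before} the continuity argument is run. The remaining ingredients---continuity of the constant-rank projectors and the reduction to finitely many patch types---are routine consequences of injectivity and the coverability assumptions.
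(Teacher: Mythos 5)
Your overall strategy---verify a Knabe-type local gap condition at $\delta=0$ using the commuting structure from Lemma~\ref{t:block_ham_gap}, then push it to small $\delta>0$---is the same as the paper's, but two of your steps do not go through as written. First, the inference ``each patch operator is a continuous matrix-valued function of $\delta$, hence its gap is a continuous function of $\delta$'' is false in general: the smallest nonzero eigenvalue of a continuous family of positive operators can drop discontinuously if the kernel dimension changes (consider $\mathrm{diag}(0,\delta,1)$ as $\delta\to 0$). What rescues the argument here is not the constant rank of the individual projectors $\Pi^{(a,b)}(\delta)$, which is all you invoke, but the constancy of the dimension of the \emph{common} kernel of all terms in a patch, i.e.\ the intersection property of injective PEPS identifying $\ker H^B_\Omega(\delta)$ with the image of the patch block tensor for all $\delta\ge 0$. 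This is precisely the ingredient whose failure on the square lattice (the rank discontinuity at $\delta=0$) is singled out in Sec.~\ref{s:generalisations} as the obstruction to extending the theorem, so it cannot be absorbed into ``routine consequences of injectivity.'' Second, the ``Knabe-type bound valid for patches on the coarse-grained lattice $G'$'' on which your reduction rests is exactly the piece you leave unproven: Knabe's criterion is one-dimensional, and you correctly flag the absence of a general-graph patch version as your main obstacle without resolving it.

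The paper avoids both issues by never going beyond \emph{pairs} of overlapping projectors: it derives $H^2\ge\Delta H$ from the single local condition Eq.~\eqref{e:sufficient_condition}, which holds on any interaction graph of bounded degree $r$, and, rather than a soft continuity argument, it exploits the specific algebraic structure of the deformation. Polar-decomposing $\hat B^{(R_a)}(\delta)=Q_a(\delta)W_a(\delta)$ shows that the deformed terms arise from the commuting $\delta=0$ terms by conjugation with positive operators $\Lambda_a(\delta)\ge I$, and Lemma~\ref{t:stab_gap} proves the pairwise condition is stable under such conjugation whenever the largest eigenvalue $\mu(\delta)$ of $\Lambda_a(\delta)^2$ satisfies $\mu(\delta)<\frac12+\frac12\sqrt{(r+1)/(r-1)}$. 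This quantitative route is what yields the explicit values of $\delta_c$ in Sec.~\ref{s:size_gap}; your compactness argument, even once repaired along the lines above, would only produce a non-constructive $\delta_c>0$.
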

\begin{proof}
    See Appendix \ref{s:gap_region}.
\end{proof}
Thus, there is an open interval on which we can prove that the blocked Hamiltonian $H^B(\delta)$ is gapped. To complete the proof of Theorem \ref{t:gap_infinite}, we need only to show that if $H^B(\delta)$ is gapped for some $\delta>0$, then the two-body Hamiltonian $H(\delta)$ is also gapped.  Here, and for the rest of the paper, we will use the convention that if $A$ and $B$ are Hermitian operators, then $A\ge B$ means that $A-B$ is a positive operator (i.e., each eigenvalue is real and greater than or equal to zero).
\begin{lemma}
    \label{t:twobody_to_block}
    For $\delta>0$, $H(\delta)$ is gapped if and only if $H^B(\delta)$ is gapped. 
\end{lemma}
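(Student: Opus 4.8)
The plan is to prove the biconditional by sandwiching the two Hamiltonians between positive multiples of one another, with constants that are independent of the system size $N$ (they may depend on $\delta$). Concretely, I aim to establish operator inequalities of the form
\begin{equation}
    c(\delta)\,H^B(\delta)\le H(\delta)\le C(\delta)\,H^B(\delta)\,,
\end{equation}
with $0<c(\delta)\le C(\delta)$ uniform in $N$. Since for $\delta>0$ both Hamiltonians are frustration-free sums of projectors sharing the \emph{same} unique ground state $\ket{\psi(\delta)}$, and hence the same kernel $\mathbb{C}\ket{\psi(\delta)}$, the gap of each equals the minimum of its Rayleigh quotient over the common orthogonal complement of $\ket{\psi(\delta)}$. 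The sandwiching then yields $c(\delta)\Delta_N^{B}\le\Delta_N\le C(\delta)\Delta_N^{B}$, where $\Delta_N$ and $\Delta_N^{B}$ are the finite-size gaps of $H(\delta)$ and $H^B(\delta)$; because $c(\delta),C(\delta)$ do not depend on $N$, a uniform lower bound on one gap forces a uniform lower bound on the other, which is exactly the claimed equivalence.

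Both inequalities reduce to comparisons on a single pair of neighbouring injective regions $R_a\cup R_b$, which are then summed over $\langle a,b\rangle\in E'$. For the upper bound I would first note that the kernel of $\Pi^{(a,b)}(\delta)$ is, by definition, the image of the block map $\hat B^{(a,b)}(\delta)$, i.e., the support of the reduced density operator of $\ket{\psi(\delta)}$ on $R_a\cup R_b$. Any edge term $Q(\delta)_{ij}$ with $\langle i,j\rangle\subseteq R_a\cup R_b$ annihilates this subspace: since $Q(\delta)_{ij}\ket{\psi(\delta)}=0$ and $Q(\delta)_{ij}$ acts only on sites inside the region, it annihilates every vector obtained from $\ket{\psi(\delta)}$ by projecting the complementary sites onto a basis state, and such vectors span the support. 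Hence $\ker\Pi^{(a,b)}(\delta)\subseteq\ker Q(\delta)_{ij}$, and as both are projectors this gives $Q(\delta)_{ij}\le\Pi^{(a,b)}(\delta)$. Assigning each edge to one region-pair containing it and summing, the bounded number of edges per region-pair (guaranteed by coverability) yields $H(\delta)\le K\,H^B(\delta)$ for a constant $K$ independent of $N$; this already proves the direction ``$H(\delta)$ gapped $\Rightarrow$ $H^B(\delta)$ gapped''.

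The hard direction, and the main obstacle, is the reverse bound $H^B(\delta)\le C(\delta)\,H(\delta)$, for which I would compare, on each region-pair, the projector $\Pi^{(a,b)}(\delta)$ with the local operator $h^{(a,b)}(\delta):=\sum_{\langle i,j\rangle\subseteq R_a\cup R_b}Q(\delta)_{ij}$. The previous paragraph gives $\ker\Pi^{(a,b)}(\delta)\subseteq\ker h^{(a,b)}(\delta)$; the crux is to prove the reverse inclusion, namely that the common kernel of the two-body terms inside $R_a\cup R_b$ is no larger than the reduced-density-operator support. This is precisely the \emph{intersection (injectivity) property}: for $\delta>0$ the deformations $D_i(\delta)$ are invertible, so $\ket{\psi(\delta)}$ inherits the local structure of the AKLT state and the block map $\hat B^{(a,b)}(\delta)$ is injective (Appendix~\ref{s:injective}), which forces the local ground space of the two-body parent Hamiltonian on the region to coincide with the image of $\hat B^{(a,b)}(\delta)$. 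Granting equality of kernels, $\Pi^{(a,b)}(\delta)$ and $h^{(a,b)}(\delta)$ are positive operators on a \emph{fixed finite-dimensional} space with the same kernel, so $h^{(a,b)}(\delta)\ge\lambda_{\min}(\delta)\,\Pi^{(a,b)}(\delta)$, where $\lambda_{\min}(\delta)>0$ is the smallest nonzero eigenvalue of $h^{(a,b)}(\delta)$. Coverability ensures that all region-pairs are identical up to a boundary unitary, so the single constant $\lambda_{\min}(\delta)$ serves for every $\langle a,b\rangle$ and is independent of $N$. Summing over region-pairs, again using the bounded edge multiplicity, gives $H^B(\delta)\le C(\delta)\,H(\delta)$ and completes the sandwiching. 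The delicate point throughout is maintaining $N$-independence of all constants, which is exactly what the coverability hypothesis and the uniform injectivity of the blocked regions are designed to supply.
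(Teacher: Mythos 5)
Your proposal is correct and follows essentially the same route as the paper: group the two-body terms by neighbouring injective regions, show the local sum has the same kernel as the blocked projector $\Pi^{(a,b)}(\delta)$, and sandwich the one against the other using the smallest and largest nonzero eigenvalues of the local sum, with coverability supplying the $N$-independence of these constants. The only caveat is that the crux you correctly identify---that the common kernel of the two-body terms on $R_a\cup R_b$ is no larger than the image of $\hat{B}^{(a,b)}(\delta)$---does not actually follow from injectivity of the block map alone; the paper likewise asserts this kernel equality without a detailed proof, deferring to the inductive argument of Ref.~\cite{niggemann_quantum_1997}, so your argument is no less complete than the paper's at that step.
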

\begin{proof}
    Let $\{R_a: a=1,\dots,M\}$ be an injective covering of $G$ such that the degree of each coarse-grained vertex in $V'$ (i.e., the number of outgoing edges of each $R_a$) is a constant $r$ (the existence of such a covering was an assumption of our graph). We rewrite $H(\delta)$ by grouping interaction terms according to the coarse-grained graph, as
\begin{align}
H(\delta)=\sum_{\langle a,b \rangle \in E'} h(\delta)_{ab}
\label{e:grouped_interactions}
\end{align}
where the grouped terms can be expressed in terms of this new course graining as
\begin{equation}
    h(\delta)_{ab} :=\sum_{\langle i,j \rangle\in E_a\cup E_b} r^{-1} Q(\delta)_{ij} 
    +\sum_{\langle i,j \rangle \in E_{ab}}Q(\delta)_{ij}\,,
\end{equation}
where $E_a\subseteq E$ is the set of edges contained entirely in the region $R_a$ and, as before, $E_{a,b}$ is the set of edges joining vertices in $R_a$ and $R_b$. It is straightforward to show that the sum in Eq.~\eqref{e:grouped_interactions} is equal to $H(\delta)$ as defined in Eq.~\eqref{e:twobody_gs_ham}. 

Comparing $h(\delta)_{ab}$ to the blocked interaction term $\Pi^{(a,b)}(\delta)$, one finds that the kernel of a single term $h(\delta)_{ab}$ is equal to the kernel of the projector $\Pi^{(a,b)}(\delta)$.  We will not provide a detailed proof here; it is possible to verify following the inductive proof in Ref.~\cite{niggemann_quantum_1997}, Appendix A, where it was shown that the spin-3/2 AKLT model (and deformed versions of it) has a unique ground state. Because these operators have the same kernel, we have
\begin{equation}
    \lambda_{\rm min}(\delta)\Pi^{(a,b)}(\delta)\le h(\delta)_{ab}\le\lambda_{\rm max}(\delta)\Pi^{(a,b)}(\delta)\,,
    \label{e:block_inequality}
\end{equation}
where $\lambda_{\rm min}(\delta)>0$ and $\lambda_{\rm max}(\delta)>0$ are respectively the smallest and largest non-zero eigenvalue of $h(\delta)_{ab}$.  Importantly, $\lambda_{\rm min}(\delta)>0$ and $\lambda_{\rm max}(\delta)>0$ are independent of the system size. From Eq.~\eqref{e:block_inequality}, it follows that $\lambda_{\rm min}(\delta)H^B(\delta)\le H(\delta)$.    Therefore, for any system size $N$, if $H^B(\delta)$ has gap $\Delta(\delta)$, $H(\delta)$ will have a gap of at least $\lambda_{\rm min}(\delta)\Delta(\delta)$. Hence for any fixed $\delta>0$, $H(\delta)$ is gapped if $H^B(\delta)$ is gapped. Likewise, as $\lambda_{\rm max}\iv(\delta)H(\delta)\le H(\delta)^B$, $H(\delta)$ being gapped implies  $H^B(\delta)$ is gapped.

We remark that, while $\lambda_{\rm min}(\delta)>0$ for all $\delta>0$, taking the limit gives $\lim_{\delta\rightarrow 0}\lambda_{\rm min}(\delta)=0$. Thus we cannot prove a gap at $\delta=0$ (which is unsurprising, given that we showed that the ground space becomes degenerate at this point in Sec.~\ref{s:finite_gap}).  
\end{proof}

This completes the proof of Theorem~\ref{t:gap_infinite} that, under certain assumptions on the interaction graph, $H(\delta)$ is gapped for $\delta$ sufficiently small.

\subsubsection{Generalisation to arbitrary lattices}
\label{s:generalisations}

In the proof of Theorem~\ref{t:gap_infinite}, we showed that the two body Hamiltonian $H(\delta)$ is gapped for sufficiently small $\delta>0$ provided the interaction graph is coverable (as described in Sec.~\ref{s:restrictions}). Here we will discuss obstacles to generalising this result to non-coverable graphs. 

Consider, for example, the square lattice. The state $\ket{\psi(\delta)}$ is the unique ground state of the two-body, spin-2 Hamiltonian $H(\delta)$ for $\delta>0$ on this lattice, but is it gapped? The state $\ket{\psi(\delta)}$ is not injective on this lattice, and thus Theorem \ref{t:gap_infinite} does not apply to it directly. We can nevertheless follow the proof of Theorem \ref{t:gap_infinite} and construct a parent Hamiltonian for every $\delta\ge0$ with $\ket{\psi(\delta)}$ as a unique ground state. 

First we partition the particles into $2 \times 2$ squares $\{R_a: a=1,\dots, M\}$. Let $\Pi^{(a,b,c,d)}(\delta)$ be a blocked interaction term that acts on a $2\times 2$ square of four such regions $R_a$, $R_b$, $R_c$, and $R_d$, which is defined, as usual, as the projection onto the orthogonal complement of the image of the block tensor $\hat{B}^{(R_a\cup R_b \cup R_c \cup R_d)}(\delta)$. One can show that the blocked Hamiltonian $H^B(\delta):=\sum_{\langle a, b, c, d\rangle}\Pi^{(a,b,c,d)}(\delta)$, with sum taken over all such squares, has $\ket{\psi(\delta)}$ as a unique ground state for both $\delta=0$ and $\delta>0$. It is also possible to show that Lemma~\ref{t:block_ham_gap} holds (i.e., $H^B(0)$ is gapped) and Lemma~\ref{t:twobody_to_block} holds (i.e., the two-body Hamiltonian $H(\delta)$ is gapped if and only if $H^B(\delta)$ is gapped for $\delta>0$). The problem arises when we try to interpret $H^B(\delta)$ for $\delta>0$ as a small perturbation to the gapped Hamiltonian $H^B(0)$, as is required to prove Lemma~\ref{t:block_gap_region}. One finds that there is a discontinuity of the rank of the interaction terms $\Pi^{(a,b,c,d)}(\delta)$ at $\delta=0$, specifically the rank is smaller at $\delta>0$ than at $\delta=0$. This is in contrast to the injective case, where the rank of the block tensor (and therefore also the interaction terms) is constant for all $\delta\ge 0$. As a result, the stability arguments used to prove Lemma~\ref{t:block_gap_region} and the proof methods of Appendix~\ref{s:gap_region} cannot be applied directly.  Hence, it does not seem as straight-forward to prove a gap for non-injective lattices, as for injective ones. 

\subsection{Gapped regions for trivalent lattices}
\label{s:size_gap}

Theorem~\ref{t:gap_infinite} states that for the two-body Hamiltonian $H(\delta)$ (defined on an appropriate lattice) there exists a $\delta_c>0$ such that $H(\delta)$ is gapped for all $\delta$ on the open interval $(0,\delta_c)$. Here we will explicitly compute lower bounds on $\delta_c$ on various trivalent lattices.  

Let $H(\delta)$ be defined on a graph with an injective covering $\{R_a:a=1,\dots,M\}$, where each injective region $R_a$ has the same number of outgoing edges $r$. Consider the operator $B^{(R_a)\dag}(\delta) B^{(R_a)}(\delta)$ and let $\gamma_{\rm min}(\delta)$ and $\gamma_{\rm max}(\delta)$ be its smallest and largest eigenvalues respectively. Note that the ratio $\gamma_{\rm max}(\delta)/\gamma_{\rm min}(\delta)$ is a measure of how far $B^{(R_a)\dag}(\delta) B^{(R_a)}(\delta)$ is from the identity. At $\delta=0$, this ratio achieves its minimum of 1 (i.e., $B^{(R_a)\dag}(0) B^{(R_a)}(0)$ is the identity). In Appendix~\ref{s:gap_region} we show that $H(\delta)$ is gapped if $\gamma_{\rm max}(\delta)/\gamma_{\rm min}(\delta)<\mu_0$, where 
\begin{equation}
    \label{e:gap_condition}
    \mu_0=\fr{2}+\fr{2}\sqrt{\frac{r+1}{r-1}}\,.
\end{equation}
Thus our bound depends on the number of outgoing edges in each injective region $r$. 

Note that $\mu_0>1$ is greater than 1 for any $r$, and thus there is always a region around $\delta=0$ for which this bound proves that the Hamiltonian is gapped.  

\begin{figure}[t]
    \centering
    \includegraphics[width=0.3\textwidth]{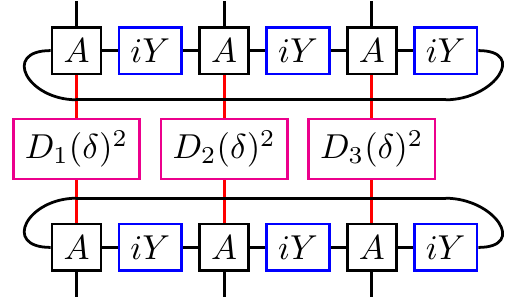}
    \caption{Expressing the operator $\hat{B}^{(R_a)}(\delta)^\dag \hat{B}^{(R_a)}(\delta)$ for the star lattice (with $r=3$) as a contraction of tensors. Where the indices at the top represent the input and the indices at the bottom represent the output. Physical indices in this figure are represented by red edges. The Hamiltonian $H(\delta)$ is gapped when the eigenvalues of this operator satisfy Eq. \eqref{e:gap_condition}. }
    \label{f:doubled_tensor}
\end{figure}
We can evaluate the operator $B^{(R_a)\dag}(\delta) B^{(R_a)}(\delta)$ by contracting a tensor network, illustrated in Fig. \ref{f:doubled_tensor}. For $r=3,4,5,6$ we have $\mu_0^{-1} \approx 0.828,\ 0.873,\ 0.899,\ 0.916$ respectively corresponding respectively to $\delta_c\approx 0.28,\ 0.13,\ 0.12,\ 0.08$ and the Hamiltonian $H(\delta)$ will be gapped for all $\delta \in (0, \delta_c)$. Our upper bound on the size of the gapped region decreases as a function of $r$, the number of outgoing edges in an injective region.

\section{Conclusion}
\label{s:conclusion}

We have shown that methods from quantum information theory, in particular from the study of graph states and AKLT states for quantum computation, can be useful for proving spectral properties of 2D antiferromagnetic spin models.  In particular, for a family of 2D antiferromagnetic spin models obtained by a one-parameter deformation of the 2D AKLT model defined here, we have proven that there exists a spectral gap separating the ground state from the first excited state in the thermodynamic limit for a range of parameter space of this family.  

An implication of our results is that this family of models has many of the desired properties for quantum computation, in particular that there exists a finite region in the parameter space that is both computationally universal and gapped.  The existence of a gap may allow for efficient methods to prepare such resource states for quantum computation~\cite{verstraete_quantum_2009, kraus_preparation_2008-1, johnson_general_2015}.

It would be desirable to prove that the gapped region extends to $\delta=1$ for the class of lattices defined here, where $H(\delta{=}1)$ is the AKLT Hamiltonian $H^A$.  This would settle the long-standing conjecture regarding the existence of a spectral gap for the spin-$3/2$ 2D AKLT model. While the region we prove to be gapped unfortunately does not extend to $\delta=1$, there are various ways by which proofs of a larger gapped region may be sought.  For instance in, mixing properties of Markov chains were used in Ref.~\cite{perez-garcia_peps_2008} to show that the Ising PEPS parent Hamiltonian is gapped up to the critical point.  Other methods from statistical physics, e.g., using `martingales' \cite{spitzer_improved_2003} may also be used to improve our result. 

\begin{acknowledgements}
  We thank Gavin Brennen for discussions.  We acknowledge support from the ARC via project number DP130103715 and the Centre of Excellence in Engineered Quantum Systems (EQuS), project number CE110001013. ASD is partially funded by Canada's NSERC and the Canadian Institute for Advanced Research.
\end{acknowledgements}

\appendix 
\section{PEPS definitions}
\label{s:peps_defs}
In this section we will provide simple tensor-network descriptions of AKLT states and graph states and for the interpolating path between the two.

\subsection{AKLT states as PEPS}
\label{s:aklt_tensor_def}

An AKLT state may be defined on a graph $G=(V,E)$ as follows. At each vertex $i$ of the graph place a spin-$S_i$ particle, where $S_i=(d_i-1)/2$ and $d_i$ is the degree of vertex $i\in V$. A basis for the Hilbert space of each particle is given by $\{\ket{M}:M=S_i,S_i-1,\dots,-S_i+1, -S_i\}$, corresponding to the eigenstates of the $z$-component of the spin operator $S^z$ . For each vertex $i$ let $A^{\beta_i}_{\alpha_1,\alpha_2,\dots \alpha_{d_i}}$ be a tensor of Clebsh-Gordan coefficients for a set of $d_i$ spin-$1/2$ particles coupling to total spin-$S_i$, where $\beta_i=M$ is the total spin along the $z$-axis and $\alpha_k$ is the component of the $k$-th constituent spin-1/2 along the $z$-axis. For example, if $S_i=3/2$, $A^\beta_{\alpha_1,\alpha_2,\alpha_3}$ will have the following non-zero entries
\begin{align*}
    A^{\frac{3}{2}}_{000}=A^{-\frac{3}{2}}_{111}&=1\,,\\
    A^{\frac{1}{2}}_{001}=A^{\frac{1}{2}}_{010}=A^{\frac{1}{2}}_{100}&=1/\sqrt{3}\,,\\
    A^{-\frac{1}{2}}_{011}=A^{-\frac{1}{2}}_{110}=A^{-\frac{1}{2}}_{101}&=1/\sqrt{3}\,,
\end{align*}
where $0$ and $1$ subscripts refer to the spin eigenstates $m_z=1/2$ and $m_z=-1/2$ respectively.

Given the $A$ tensors, the resulting tensor network for the AKLT state is defined according to Fig. \ref{f:aklt_graph_tensors}(a). An $iY$ matrix is placed on each edge connecting neighbouring $A$ tensors. This is due to the fact that the bonds used to define the AKLT PEPS are singlets $1/\sqrt{2}(1\otimes iY) (\ket{00}+\ket{11})$, rather than $1/\sqrt{2}(\ket{00}+\ket{11})$,  and gives the AKLT state its antiferromagnetic character. (Note that we do not need to specify which index of these edge $iY$ tensors is contracted with which neighbouring tensor, as swapping these indices only results in an overall ignorable phase of $-1$.) Note that we may contract tensors without a physical index (the $iY$ tensors in this case) with their neighbours such that each tensor has one physical index, thus satisfying the usual definition of a PEPS.

\subsection{Graph states as PEPS}
\label{s:gs_peps}

Graph states may also be represented as simple tensor networks. Let $G=(V,E)$ be a graph defining a graph state, according to Eq.~\eqref{e:gs_definition} and let $d_i$ be the degree of vertex $i$. At each vertex $i\in V$ place a spin-1/2 particle and define a tensor $C$ with $d_i$ virtual indices and for which there are only two non-zero entries 
\begin{equation}
    C^{0}_{00\dots 0}=C^{1}_{11\dots 1}=1\,,
    \label{e:c_def}
\end{equation}
where the labels $0$ and $1$ for physical and virtual indices may be regarded as labels for spin $1/2$ and $-1/2$ states respectively. Applying a $CZ$ gate to the physical index of any two disconnected $C$ tensors is equivalent to adding an additional virtual index to each $C$ and contracting a Hadamard matrix $H=\rt{2}(X+Z)$ between them. 

Graphically this can be represented as 
\begin{equation}
    \vcenter{\mbox{\includegraphics[width=0.42\textwidth]{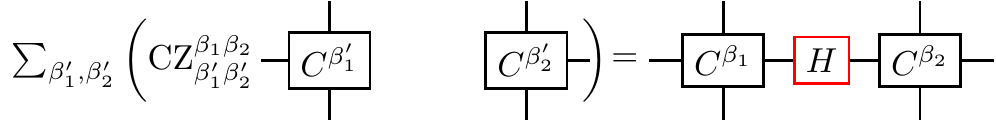}}}
    \label{e:graph_add_edge}
\end{equation}
where ${\rm CZ^{\beta_1,\beta_2}_{\beta_1',\beta_2'}}$ is the tensor of the ${\rm CZ}$ in the standard computational basis with $\beta_1',\beta_2'$ being input indices and $\beta_1, \beta_2$ being output. Although we have illustrated this with with $C$ tensors that initially have three virtual indices, this identity will hold for any number of initial virtual indices (including zero). Note that as $H^T=H$, there is no need to specify which index of $H$ is contracted with which neighbour.  The tensor network for a general graph state therefore consists of a $C$ tensor at each vertex and a $H$ contracted between neighbouring $C$ tensors on each edge, as illustrated in Fig.~\ref{f:aklt_graph_tensors}(b).

\subsection{The family of states $\ket{\psi(\delta)}$ as PEPS}

Finally, we will define the tensor network for the family of state $\ket{\psi(\delta)}$ defined in Sec.~\ref{s:deform_definition} interpolating between between the AKLT state (at $\delta=1$) and a graph state (at $\delta=0$). As these states can be obtained by applying a product of single-particle operators $\bigotimes_i D_i(\delta)$ to the AKLT state, the tensor network of $\ket{\psi(\delta)}$ can be obtained simply by replacing the $A$ tensors of the AKLT state with the $\delta$-dependent tensors $A(\delta)^{\beta_i}_{\alpha_1,\alpha_2,\dots \alpha_{d_i}}:=\sum_{\beta_i'}D_i(\delta)^{\beta_i \beta_i'}A^{\beta_i'}_{\alpha_1,\alpha_2,\dots \alpha_{d_i}}$. We show in Appendix \ref{s:local_conversion} that the tensor network for the state $\ket{\psi(0)}$ is equivalent, up to local unitaries, to the tensor network of a graph state.

We remark that the tensors we have used to define AKLT states, graph states and $\ket{\psi(\delta)}$ have only real entries. 

\begin{figure}
    \centering
    \begin{subfigure}[c]{0.23\textwidth}
        \includegraphics[width=\textwidth]{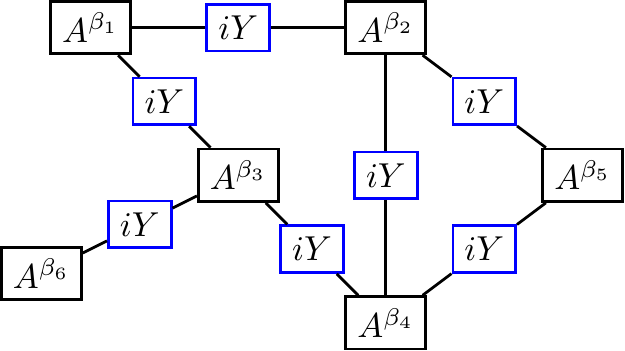}
        \caption{}
        \label{f:aklttensor}
    \end{subfigure}
    \begin{subfigure}[c]{0.23\textwidth}
        \includegraphics[width=\textwidth]{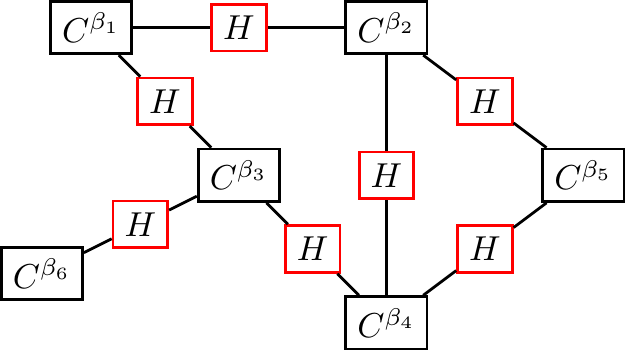}
        \label{f:graph_tensors}
        \caption{}
    \end{subfigure}    
    \caption{PEPS for an AKLT state (a) and a graph state (b) defined on a particular graph of 6 vertices, where the $A$ and $C$ tensors are defined in the text. Note that the coloured tensors do not have a physical index, and can be absorbed into their neighbours if desired. }
    \label{f:aklt_graph_tensors}
\end{figure}

\section{Local conversion of AKLT states to graph states}
\label{s:local_conversion}
Here we will provide a proof of the fact that, on three-colourable graphs, AKLT states can be converted to graph states  by applying a rank-2 projector to each particle. The result was originally proved in Ref.~\cite{wei_affleck-kennedy-lieb-tasaki_2011} for the particular case of spin-$3/2$ particles using the stabilizer formalism. Here we will provide an alternative proof using tensor networks. 

Let $G$ be a three-colourable graph and let $\ket{\rm AKLT}$ be the AKLT state defined on this graph, according to Appendix \ref{s:aklt_tensor_def}. Let $\{c_i\in \{x,y,z\}:i=1\dots N\}$ be a three-colouring of the vertices of $G$, in the sense that $c_i\ne c_j$ if $i$ and $j$ are neighbours in $G$. 

Define $P^x$, $P^y$ and $P^z$ as in Eq. \eqref{e:projector_def}. Then the AKLT state can be transformed into a graph state by these local operators in the following sense. 
\begin{lemma} The following relation holds:
    \begin{equation}
        \bigotimes_{i=1}^N P^{c_i} \ket{{\rm AKLT}}\propto\bigotimes_{i=1}^NZ(\theta_i)\ket{G}\,,
        \label{e:aklttograph}
    \end{equation}
    where $\ket{G}$ is the graph state as defined in Appendix~\ref{s:gs_peps}, $Z(\theta)={\rm diag}(1, \exp(i \theta))$ is a local $z$-rotation by $\theta$ and where $\theta_i \in \{0,\pi/2, \pi, 3\pi/4\}\ \forall\ i$. 
\end{lemma}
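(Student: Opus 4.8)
The plan is to work entirely at the level of the tensor-network descriptions of Appendices~\ref{s:aklt_tensor_def} and~\ref{s:gs_peps}, exploiting the fact that the AKLT vertex tensor $A$ is an $SU(2)$ intertwiner. Since $A^{\beta}_{\alpha_1,\dots,\alpha_{d}}$ is a tensor of Clebsch--Gordan coefficients coupling $d$ virtual spin-$1/2$'s to a physical spin-$S$, it satisfies the equivariance relation $D^{S}(g)\,A = A\,\big(D^{1/2}(g)\big)^{\otimes d}$ for every $g\in SU(2)$, where $D^{S}$ and $D^{1/2}$ are the spin-$S$ and spin-$1/2$ irreps. I would establish this relation first, as it is the single structural fact that drives the whole argument.

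Next I would rewrite each projector in a common frame. Writing $g_{c}$ for a rotation carrying the $z$-axis to the $c$-axis, Eq.~\eqref{e:projector_def} gives $P^{c}=D^{S}(g_{c})\,P^{z}\,D^{S}(g_{c})^{\dagger}$, with $P^{z}=\ket{S}_z\bra{S}+\ket{-S}_z\bra{-S}$ the projector onto the extremal $z$-weights. Applying this to the vertex tensor and using equivariance yields $P^{c}A = D^{S}(g_{c})\,(P^{z}A)\,\big(D^{1/2}(g_{c})^{\dagger}\big)^{\otimes d}$. The key elementary observation is that $P^{z}A$ retains only the $\beta=\pm S$ components of the Clebsch--Gordan tensor, namely $A^{S}_{0\dots0}=A^{-S}_{1\dots1}=1$; this is precisely the copy tensor $C$ of Eq.~\eqref{e:c_def}, once we identify the logical states $\ket{0},\ket{1}$ with $\ket{S}_z,\ket{-S}_z$. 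Thus projecting each vertex replaces $A$ by a copy tensor dressed by a physical rotation $D^{S}(g_{c})$ (which merely relabels the logical basis into the $c$-eigenbasis, so acts as the identity on the logical qubit) together with one virtual rotation $D^{1/2}(g_{c})^{\dagger}$ on each outgoing leg.

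It then remains to reduce the edge tensors. After projection, each AKLT bond carries the singlet tensor $iY$ sandwiched between the virtual rotations $D^{1/2}(g_{c_i})^{\dagger}$ and $D^{1/2}(g_{c_j})^{\dagger}$ coming from its two endpoints, whose colours are distinct. I would use the $SU(2)$-invariance of the singlet, $u\,(iY)\,u^{T}=iY$, to simplify this combination, and then compute the resulting $2\times2$ edge matrix explicitly for each of the three ordered pairs of distinct colours drawn from $\{x,y,z\}$. The claim, which I would verify by direct calculation, is that in every case this matrix equals the Hadamard $H=\frac{1}{\sqrt2}(X+Z)$ of Eq.~\eqref{e:graph_add_edge} up to diagonal phase factors acting on its two legs. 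Pulling each such diagonal phase through the adjacent copy tensor $C$---which, because $C$ copies the logical value to all legs, converts a virtual $Z$-phase into an equal $Z$-phase on the physical leg---collects all residual phases into a single on-site rotation $Z(\theta_i)$ at each vertex. What is left on every edge is exactly a Hadamard contracted between two $C$ tensors, which by Appendix~\ref{s:gs_peps} is the PEPS for the graph state $\ket{G}$, establishing Eq.~\eqref{e:aklttograph}.

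The conceptually clean steps are the intertwining identity and the observation $P^{z}A=C$; I expect the main obstacle to be the bookkeeping in the final paragraph---verifying, for each colour pair, that the dressed singlet reduces to a Hadamard up to $Z$-rotations, and that the phases accumulated around each vertex indeed sum modulo $2\pi$ to one of $\{0,\pi/2,\pi,3\pi/4\}$. This is a finite collection of explicit $SU(2)$ computations rather than a conceptual difficulty, but it is where a concrete choice of the rotations $g_{c}$ must be fixed and carried through consistently.
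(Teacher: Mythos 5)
Your proposal is correct and follows essentially the same route as the paper's proof: the observation that $P^zA=C$, the use of the rotational symmetry of the Clebsch--Gordan tensor to reduce $P^x$ and $P^y$ to $P^z$ at the cost of (Hadamard-like) rotations on the virtual legs, the explicit case-by-case reduction of the three dressed edge matrices to $H$ up to $z$-rotations and phases, and the absorption of those $z$-rotations into the physical index via the copy-tensor identity. The only difference is presentational: you state the $SU(2)$ intertwining property abstractly where the paper invokes it as "the symmetry of the $A$ tensors," but the substance of every step matches.
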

\begin{proof}Applying $\bigotimes_{i=1}^N P^{c_i}$ to the AKLT state attaches one of $P^x, P^y, P^z$ to the physical index of each tensor. It is straightforward to show that applying $P^z$ transforms the AKLT tensor $A$ to the graph state tensor $C$, i.e. $\sum_{\beta'}(P^z)^{\beta,\beta'}A^{\beta'}_{\alpha_1\alpha_2\dots \alpha_n}=C^{\beta}_{\alpha_1\alpha_2\dots \alpha_n}$. Now let $e^{i\theta \vec{r}\cdot \vec{S}}$ be the spatial rotation by angle $\theta$ about the axis $\vec{r}$, where $\vec{S}=(S^x,S^y,S^z)$ is the vector of spin operators.  (Note that we have adopted a different convention in our above definition of the $z$-rotation of a spin-1/2 particle $Z(\theta)$, so that, for example $Z(\pi)$ equals Pauli $Z$, however these two definitions differ only up to an overall phase.)  From the symmetry of the $A$ tensors, applying the same spatial rotation to the physical index and all virtual indices leaves $A$ invariant. Furthermore, applying either $P^x$ or $P^y$ is equivalent to applying a rotation to the physical index, followed by applying $P^z$.  From this, one can see that, up to an overall phase, $\sum_{\beta'}(P^x)^{\beta,\beta'}A^{\beta'}_{\alpha_1\alpha_2\dots \alpha_n}$ and  $\sum_{\beta'}(P^y)^{\beta,\beta'}A^{\beta'}_{\alpha_1\alpha_2\dots \alpha_n}$ are respectively 
\begin{align}
    \sum_{\alpha_1',\alpha_2',\dots,\alpha_n'}H_{\alpha_1,\alpha_1'}H_{\alpha_2,\alpha_2'}\cdots H_{\alpha_n,\alpha_n'}C^{\beta}_{\alpha_1'\alpha_2'\dots \alpha_n'}\notag\,,\\
    \sum_{\alpha_1',\alpha_2',\dots,\alpha_n'}H_{\alpha_1,\alpha_1'}'H_{\alpha_2,\alpha_2'}'\cdots H_{\alpha_n,\alpha_n'}'C^{\beta}_{\alpha_1'\alpha_2'\dots \alpha_n'}\label{e:hadamard_insertion}\,.
\end{align}
where $H=\rt{2}(X+Z)$ and $H'=\rt{2}(Y+Z)$ are complex Hadamard matrices. Applying the projectors $\bigotimes_{i=1}^N P^{c_i}$ to $\ket{\rm AKLT}$ thus converts all $A$ tensors to $C$ tensors and attaches $H$ and $H'$ to their virtual indices according to which projector ($P^x$  $P^y$ or $P^z$) was applied. This is illustrated for a three particle AKLT state in Fig \ref{f:aklt_to_gs}(a)-(b). 
\begin{figure}
    \centering
    \begin{subfigure}[c]{0.22\textwidth}
        \includegraphics[width=\textwidth]{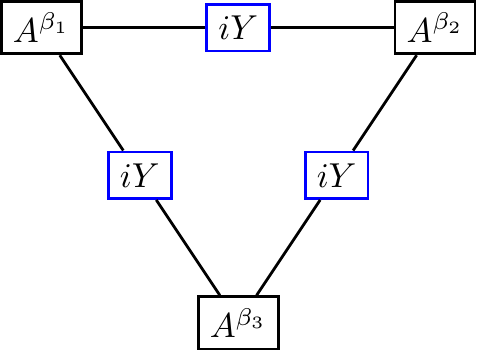}
        \caption{}
    \end{subfigure}\quad\quad
    \begin{subfigure}[c]{0.22\textwidth}
        \includegraphics[width=\textwidth]{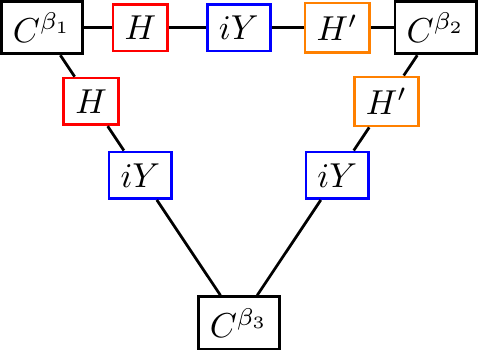}
        \caption{}
    \end{subfigure}\quad\quad
    \begin{subfigure}[c]{0.22\textwidth}
        \includegraphics[width=\textwidth]{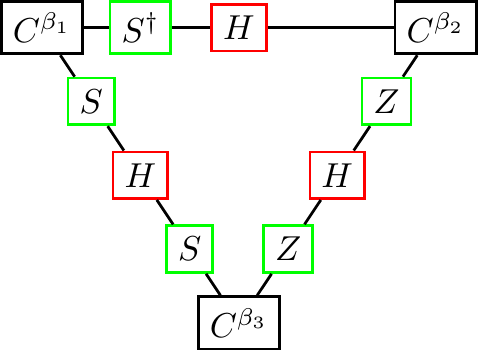}
        \caption{}
    \end{subfigure}\quad\quad
    \begin{subfigure}[c]{0.22\textwidth}
        \includegraphics[width=\textwidth]{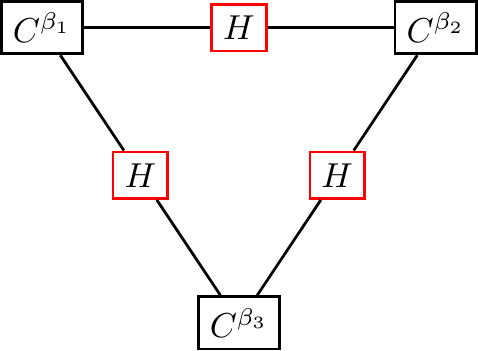}
        \caption{}
    \end{subfigure}
    \caption{Illustrating how to convert AKLT states to graph states with single-particle projectors. We consider an AKLT state on a triangle where a different projector is applied to each particle. (a) AKLT state on a triangle. (b) State after applying $P^x$, $P^y$, $P^z$ to particles 1, 2 and 3 respectively. (c) Rewriting each edge matrix as $H$ multiplied on the left and right by $z$-rotations. (d) Absorbing all $z$-rotations into the $C$'s. They will appear only on the physical indices (not pictured).  }
    \label{f:aklt_to_gs}
\end{figure}
To prove the lemma, we show that this tensor network is equivalent to the tensor network of a graph state described in Appendix~\ref{s:gs_peps} up to local $z$-rotations. We will use the fact that for the $C$ tensor, $Z(\theta)$ applied to any virtual index has the same effect as $Z(\theta)$ on the physical index, i.e.,
\begin{equation}
    Z(\theta)_{\alpha_k \alpha_k'}C^{\beta}_{\alpha_1'\alpha_2\dots \alpha_{k-1}\alpha_k'\alpha_{k+1} \dots \alpha_n}=Z(\theta)_{\beta, \beta'}C^{\beta'}_{\alpha_1\alpha_2\dots \alpha_n}
    \label{e:absorb_z}
\end{equation}
for any $k$. 

As seen in Fig \ref{f:aklt_to_gs}(b), after applying the projectors \eqref{e:projector_def} to the state and using Eq.~\eqref{e:hadamard_insertion}, we are left with a product of matrices along edges of the tensor network between the $C$ tensors. By assumption, if $i$ and $j$ are neighbouring vertices, $c_i\ne c_j$. Thus there are only three distinct edges we must consider $xy$, $xz$ and $yz$. These edges have the matrices $iHYH'$, $iHY$, and $iYH'$ applied on them respectively. These can all be expressed as $H$ multiplied on the left and right by $z$-rotations and by an overall phase, specifically 
\begin{align}
    iHYH'&=e^{\frac{3\pi}{4}i}S^\dag H\,, \notag\\
    iHY&=ZHZ\,,\notag\\
    iYH'&=iSHS\,,
\end{align}
where $S=Z(\pi/2)$. Using \eqref{e:absorb_z} all $z$-rotations can be moved to the physical indices of the $C$. From this we see that the tensor network is equivalent to the tensor network of a graph state described in Appendix~\ref{s:gs_peps}, up to $z$-rotations on the physical indices.  
\end{proof}

We remark that from the definition of $D_i(\delta)$ in Eq.~\eqref{e:def_deformations}, we have $P^{c_i}D_i(\delta)\propto P^{c_i}$, and therefore the result also holds for all deformed states $\ket{\psi(\delta)}$. 
\begin{corollary}
    \label{t:defaklt_to_gs}
    Let $G$ be a three-colourable graph, and $\{c_i\in \{x,y,z\}:i=1\dots N\}$ be a three-colouring of $G$. Define the corresponding one-parameter family of states $\ket{\psi(\delta)}$ according to \ref{s:deform_definition}. Then 
    \begin{equation}
        \bigotimes_{i=1}^N P^{c_i} \ket{\psi(\delta)}\propto\bigotimes_{i=1}^NZ(\theta_i)\ket{G}\,,
    \end{equation}
    for all $\delta\ge 0 $.
\end{corollary}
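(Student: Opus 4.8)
The plan is to reduce the claim directly to the preceding Lemma (whose content is Eq.~\eqref{e:aklttograph}) by exploiting the idempotency of the rank-2 projectors $P^{c_i}$ of Eq.~\eqref{e:projector_def}. Since both $\bigotimes_{i=1}^N P^{c_i}$ and $\bigotimes_{i=1}^N D_i(\delta)$ are products of single-particle operators whose $i$-th factors act on the \emph{same} physical particle $i$, I would first combine them factor by factor, writing $\bigotimes_{i=1}^N P^{c_i}\bigotimes_{i=1}^N D_i(\delta)=\bigotimes_{i=1}^N\bigl(P^{c_i}D_i(\delta)\bigr)$. This reduces the whole problem to a single local computation on each site.

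The key step is then to evaluate the local composition $P^{c_i}D_i(\delta)$. Substituting the definition \eqref{e:def_deformations} and using that $P^{c_i}$ is a projector, so that $(P^{c_i})^2=P^{c_i}$, gives
\begin{equation}
    P^{c_i}D_i(\delta)=(1-\delta)(P^{c_i})^2+\delta P^{c_i}=(1-\delta)P^{c_i}+\delta P^{c_i}=P^{c_i}
\end{equation}
for every $\delta$, so the deformation is completely absorbed by the projector. Inserting this into the state definition \eqref{e:state_definition}, I obtain $\bigotimes_{i=1}^N P^{c_i}\ket{\psi(\delta)}=\bigotimes_{i=1}^N P^{c_i}\ket{\rm AKLT}$, which is manifestly independent of $\delta$. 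Chaining this with the preceding Lemma, $\bigotimes_{i=1}^N P^{c_i}\ket{\rm AKLT}\propto\bigotimes_{i=1}^N Z(\theta_i)\ket{G}$, then immediately yields the stated relation for all $\delta\ge 0$.

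I do not expect a substantive obstacle: the corollary is essentially a one-line consequence of idempotency together with the already-established Lemma, and indeed this is precisely the content of the remark preceding the statement. The only points meriting a moment of care are that the factorwise composition is legitimate (which holds because the particles are distinct and the operators act on single sites), and that the proportionality cannot degenerate to the zero vector---but since the left-hand side equals the projected AKLT state \emph{exactly} rather than merely proportionally, and the Lemma already furnishes a nonzero proportional relation, no new vanishing can arise. The endpoint $\delta=0$, where $D_i(0)=P^{c_i}$, is covered by the very same computation and requires no separate treatment.
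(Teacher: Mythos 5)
Your proof is correct and follows exactly the paper's own route: the paper establishes the corollary via the one-line remark that $P^{c_i}D_i(\delta)\propto P^{c_i}$ (you in fact show equality, since $(1-\delta)P^{c_i}+\delta P^{c_i}=P^{c_i}$), which reduces the claim to Eq.~\eqref{e:aklttograph}. Nothing further is needed.
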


\section{Injective regions}
\label{s:injective}
Here we will prove a necessary and sufficient condition for a region of particles in the tensor network of $\ket{\psi(\delta)}$ to be injective. 
\begin{lemma}
    A region $R$ in the tensor network state $\ket{\psi(\delta)}$ for any $\delta\ge0$ is injective if and only if each vertex in $R$ is connected to at most one vertex outside $R$. 
\end{lemma}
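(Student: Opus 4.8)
The plan is to prove both implications, using the symmetric-subspace structure of the site tensors for necessity and the local conversion to graph states (Corollary~\ref{t:defaklt_to_gs}) for sufficiency.

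For the necessary direction (injective $\Rightarrow$ at most one outgoing edge per vertex), I would use that each deformed site tensor $A(\delta)$ is the deformation $D_i(\delta)$ of Eq.~\eqref{e:def_deformations} applied to a Clebsch--Gordan isometry, and hence factors through the projector $P_{\mathrm{sym}}$ onto the totally symmetric subspace of the $d_i$ virtual legs of vertex $i$. Suppose a vertex $v\in R$ has two outgoing edges $a$ and $b$. I would feed the block map $\hat{B}^{(R)}(\delta)$ the nonzero input $(\ket{01}-\ket{10})_{ab}\otimes\ket{\chi}$, antisymmetric under exchange of legs $a$ and $b$ and arbitrary on the remaining outgoing legs. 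Since $P_{\mathrm{sym}}$ symmetrises all $d_v$ legs of $v$ jointly it annihilates anything antisymmetric in a pair of legs, and since both $D_v(\delta)$ and the per-edge $iY$ (which preserves the antisymmetric pair up to a scalar, as $(M\otimes M)(\ket{01}-\ket{10})=\det(M)(\ket{01}-\ket{10})$) act only afterwards, this input lies in the kernel. Thus the map fails to be injective, and the argument holds verbatim for every $\delta\ge0$, including $\delta=0$ where the surviving factor is $P^{c_v}_v$.

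For the sufficient direction I would avoid treating $\delta>0$ and $\delta=0$ separately by invoking Corollary~\ref{t:defaklt_to_gs}, which holds for all $\delta\ge0$: applying $\bigotimes_{i\in R}P^{c_i}_i$ to the physical legs converts each $A(\delta)$ into the graph-state copy tensor $C$ of Eq.~\eqref{e:c_def}, up to invertible single-leg operations (Hadamards on the virtual legs and local $Z(\theta_i)$ on the physical legs). Composing these identities over the region yields $(\bigotimes_{i\in R}P^{c_i}_i)\,\hat{B}^{(R)}(\delta)=U\,\hat{B}^{(R)}_{\mathrm{graph}}\,V$, where $\hat{B}^{(R)}_{\mathrm{graph}}$ is the block tensor of the graph-state network of Appendix~\ref{s:gs_peps}, $U$ is invertible on the physical (output) space, and $V$ is invertible on the outgoing-leg (input) space. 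Because $\bigotimes_{i\in R}P^{c_i}_i$ acts on the output, $\ker\hat{B}^{(R)}(\delta)\subseteq\ker[(\bigotimes_{i\in R}P^{c_i}_i)\hat{B}^{(R)}(\delta)]=V^{-1}\ker\hat{B}^{(R)}_{\mathrm{graph}}$, so it suffices to prove the graph-state block tensor is injective under the degree hypothesis.

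To finish, I would establish injectivity of $\hat{B}^{(R)}_{\mathrm{graph}}$ from the copy property of $C$: its physical index equals the common computational value of all its virtual legs. Under the hypothesis each boundary vertex has exactly one outgoing leg, so its physical qubit carries a copy of that leg's value; feeding computational-basis inputs, the resulting output states are mutually orthogonal because they differ in the computational values of the boundary physical qubits, while interior vertices are fixed and contribute a nonzero normalisation (graph states have full computational-basis support). Orthogonal images of an input basis give injectivity. The main obstacle is making the sufficiency reduction airtight: I must check that the single-site conversion identities compose correctly under blocking so that the residual maps $U$ and $V$ really are invertible and act purely on outputs and inputs respectively, and that the contracted object is exactly the copy-tensor network of Appendix~\ref{s:gs_peps}; once this is in place, both the necessity argument and the final readout argument are routine.
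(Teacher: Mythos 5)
Your proposal is correct, and its skeleton matches the paper's: necessity is proved by feeding the singlet $\ket{01}-\ket{10}$ into two outgoing legs of a single vertex and using the permutation symmetry of the Clebsch--Gordan tensor (your extra care with the edge $iY$ matrices, via $(M\otimes M)(\ket{01}-\ket{10})=\det(M)(\ket{01}-\ket{10})$, is a point the paper glosses over), and sufficiency is proved by reducing to the graph-state tensor network via the local-conversion result. The two arguments part ways only in how they finish the sufficiency half. The paper first reduces all $\delta\ge 0$ to $\delta=0$ through the rank inequality $(\bigotimes_{i\in R}P_i)\hat{B}^{(R)}(\delta)\propto\hat{B}^{(R)}(0)$, then applies the unitary $W=\bigotimes({\rm CZ}\oplus I)$ on internal edges to disentangle the block into an explicit identity map from outgoing legs into boundary logical subspaces tensored with $\ket{+}$ states on the interior; injectivity is then read off from the identity map. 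You instead keep the Hadamard-decorated copy-tensor network and argue directly that computational-basis inputs on the outgoing legs produce nonzero, mutually orthogonal outputs (orthogonal because each boundary physical qubit copies its unique outgoing leg; nonzero because graph-state amplitudes never vanish). Both routes are valid and of comparable length; the paper's CZ-disentangling has the advantage that the resulting product form is reused verbatim in the proof of Lemma~\ref{t:block_ham_gap}, whereas your orthogonality argument is more self-contained and makes transparent exactly where the ``at most one outgoing edge per vertex'' hypothesis enters (a vertex with zero outgoing edges contributes only a normalisation, one with one outgoing edge contributes a faithful readout, and two would break the copy argument). The one step you flag as needing care --- that the single-site conversion identities compose over the block into $U\,\hat{B}^{(R)}_{\mathrm{graph}}\,V$ with $U$, $V$ invertible and acting only on outputs and inputs respectively --- does go through, since the leftover factors are precisely the per-leg Hadamards of Eq.~\eqref{e:hadamard_insertion} on uncontracted outgoing legs and the $Z(\theta_i)$ rotations on physical legs.
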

\begin{proof}
    Let $R$ be a region for which every vertex is connected to at most one vertex outside $R$, and $\hat{B}^{(R)}(\delta)$ be block tensor for this region (as defined in Sec.~\ref{s:blocked}). First we will show show that this map is injective, i.e., $R$ is injective. We showed in Appendix~\ref{s:local_conversion} that we can transform the tensors of the state $\ket{\psi(\delta)}$ to those of $\ket{\psi(0)}=\ket{G}$ via a set of rank-2 projectors. Specifically, we have that $\left(\otimes_{i\in R} P_i\right) \hat{B}^{(R)}(\delta) \propto \hat{B}^{(R)}(0)$. Therefore the rank of $\hat{B}^{(R)}(\delta)$ is equal to or greater than the rank of $\hat{B}^{(R)}(0)$ for all $\delta\ge0$. Thus showing $\hat{B}^{(R)}(0)$ is injective will imply injectivity of $\hat{B}^{(R)}(\delta)$ for all $\delta\ge0$.

To show that $\hat{B}^{(R)}(0)$ is injective, we will apply a unitary to its output (which will not change its injectivity) and transform it into a form that is clearly injective. We define a unitary $W:=\bigotimes_{\langle i,j\rangle\in E}({\rm CZ}\oplus I)_{i,j}$, as a product of ${\rm CZ}$ gates acting on all pairs of neighboring particles in the region $R$ where each $({\rm CZ}\oplus I)_{i,j}$ acts as ${\rm CZ}$ on the logical subspace $\mathcal{H}^L_i\otimes \mathcal{H}^L_j\subseteq \mathcal{H}_i\otimes \mathcal{H}$ of particles $i$ and $j$ and as the identity on the remainder of the space. (Given, however, that the image of $\hat{B}^{(R)}(0)$ is contained entirely in the logical space $\bigotimes_{i\in R}\mathcal{H}^L_i$, the action of $W$ on the remainder of the space is not actually important.) 

We have illustrated the tensor network describing the map $B^{(R)}(0)$ before and after applying $W$ in Fig.~\ref{f:graph_disentangle}. Using Eq.~\eqref{e:graph_add_edge} to remove edges,  we see that $WB^{(R)}(0)$ describes the identity map from each outgoing virtual degree of freedom into the logical subspace $\mathcal{H}^L_i$ of the particle $i$ it is incident to, with all interior particles (those without an outgoing edge) placed in the $\ket{+}$ state.  The identity map is clearly injective, therefore $B^{(R)}(\delta)$ is injective for any $\delta\ge0$.

\begin{figure}
    \centering
    \begin{subfigure}[c]{0.23\textwidth}
        \includegraphics[width=\textwidth]{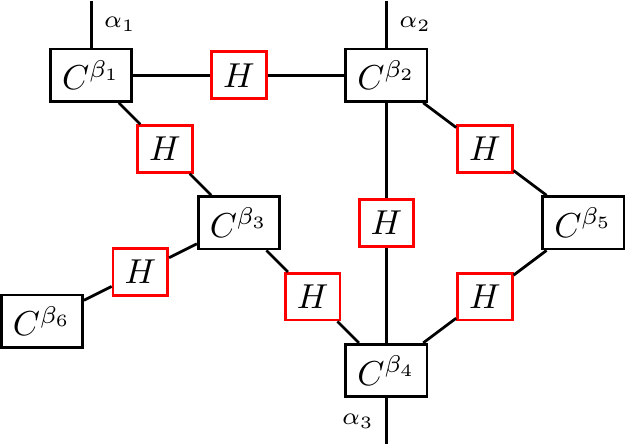}
        \caption{}
    \end{subfigure}
    \begin{subfigure}[c]{0.23\textwidth}
        \includegraphics[width=\textwidth]{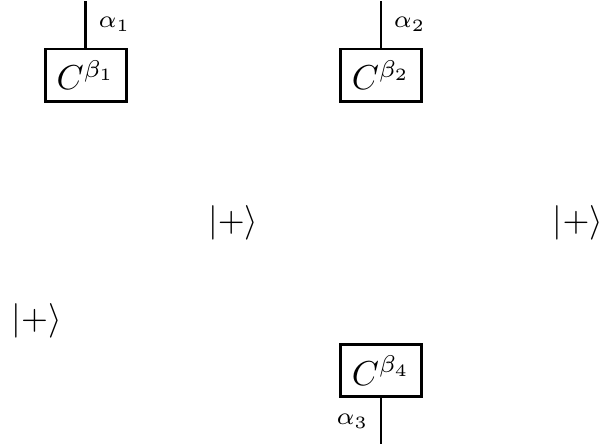}
        \caption{}
    \end{subfigure}    
    \caption{Disentangling a region in the tensor network of a graph state by applying the unitary $W$. One can clearly see that the corresponding map from outgoing virtual indices to physical indices is injective. (a) A particular region in the tensor network of a graph state with three outgoing indices. (b) The same region after the unitary $W$ has been applied to physical indices.}
    \label{f:graph_disentangle}
\end{figure}

It remains to show that having at most one outgoing index per particle is also a necessary condition for injectivity. Consider a region $R$ containing a vertex with two outgoing indices $\alpha_1$ and $\alpha_2$, say. Let $A^{\beta_i}_{\alpha_1,\alpha_2,\dots \alpha_{d_i}}$ be the tensor corresponding to this vertex. From its definition, $A$ is symmetric under any permutation of the virtual indices. Thus the input $\ket{01}-\ket{10}$ on indices $\alpha_1$ and $\alpha_2$ of the map $B^{(R)}(1)$ will map to zero. Thus $B^{(R)}(1)$ is not injective. As $B^{(R)}(\delta)\propto\bigotimes_{i\in R} D_i(\delta)B^{(R)}(1)$ for all $\delta\ge 0$, $B^{(R)}(1)$ being non-injective implies $B^{(R)}(\delta)$ is non-injective for all $\delta\ge0$.
\end{proof}

\section{Stability of gapped region}
\label{s:gap_region}
Here we will provide a proof of Lemma \ref{t:block_gap_region}, which states that under the assumptions of Theorem \ref{t:gap_infinite}, there exists $\delta_c>0$ such that $H^B(\delta)$ is gapped for all $\delta\in [0, \delta_c)$.

We make use of an argument, previously used in Refs.~\cite{knabe_energy_1988, schuch_classifying_2011} for frustration-free Hamiltonians. Let $H=\sum_{\langle j, k\rangle }h_{jk}$ be a two-body Hamiltonian acting on $N$ particles, and for simplicity we will assume that the interaction graph has constant degree $r$. Assume that the Hamiltonian is frustration-free and has a unique ground state with zero energy. If there exists a constant $\Delta>0$ such that $H^2\ge \Delta H$ for all system sizes, then $H$ is gapped.

Without loss of generality, we assume that the Hamiltonian is scaled such that $h_{kl}^2\ge h_{kl}$ for all interaction terms. 
A sufficient condition for the gap is that there exists a $\Delta>0$ such that for every particle $c$ and for any pair of distinct interaction terms $h_{ic}$, $h_{cj}$ that act non-trivially on $c$ we have 
\begin{equation}
    h_{ic}h_{cj}+h_{cj}h_{ic}\ge -\frac{1-\Delta}{2(r-1)}(h_{ic}+h_{cj})\,.
    \label{e:sufficient_condition}
\end{equation}

To see this, we expand $H^2$ to get
\begin{align}
    H^2=\left(\sum_{\langle j, k\rangle } h_{jk}^2\right)+\left(\sum_{\langle k, l\rangle,\langle p, q\rangle:\langle k, l\rangle\ne\langle p, q\rangle  } h_{kl}h_{pq}\right)\,.
\end{align}
The first term is greater than or equal to $H$. For the second term, we can split the sum into terms where $h_{kl}$ and $h_{pq}$ overlap at one site, and terms where the interactions do not overlap (disjoint). The disjoint term is clearly positive, thus 
\begin{equation}
    H^2\ge H+\left(\sum_{\langle k, l\rangle\ne\langle p, q\rangle\, {\rm overlapping}} h_{kl}h_{pq}\right)\,.
    \label{e:h2inequality}
\end{equation}
Using condition Eq.~\eqref{e:sufficient_condition} in Eq.~\eqref{e:h2inequality}, we find that the right-hand side reduces to $\Delta H$. Therefore $H$ is gapped.

Next, we show that if a Hamiltonian satisfying the gap property Eq.~\eqref{e:sufficient_condition} is deformed by a sufficiently small amount, it will remain gapped. 

\begin{lemma}
    \label{t:stab_gap}
    Let $H=\sum_{\langle i,j \rangle} h_{ij}$ be a frustration-free Hamiltonian with a unique ground state on a graph of constant degree $r$, such that $\{h_{ij}\}$ are projectors satisfying the gap condition \eqref{e:sufficient_condition} for some $0<\Delta\le1$. Define the Hamiltonian $H'=\sum_{\langle i,j \rangle} h'_{ij}$ with interaction terms
\begin{equation}
    h'_{ij}=\left(\Lambda_i\otimes\Lambda_j\right)h_{ij}\left(\Lambda_i\otimes\Lambda_j\right)\,,
    \label{e:def_interaction_terms}
\end{equation}
for a set of invertible positive operators $\{\Lambda_i\}$ which satisfy $\Lambda_i\ge I$ and let $\mu\ge1$ be the largest eigenvalue of $\Lambda_i^2$ (which we assume is independent of $i$). Define 
\begin{equation}
   \Delta':=1-\mu\left[1-\Delta+2(r-1)(\mu-1)\right]\,.
   \label{e:deformed_delta}
\end{equation}
Then $H'$ is gapped when $0<\Delta' $.
\end{lemma}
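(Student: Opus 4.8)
The plan is to show that the deformed terms $h'_{ij}$ satisfy the same sufficient condition for a gap, Eq.~\eqref{e:sufficient_condition}, but with $\Delta$ replaced by $\Delta'$, and then invoke the implication established just above the lemma (namely that Eq.~\eqref{e:sufficient_condition} forces $H'^2\ge\Delta' H'$, and hence a gap of at least $\Delta'$ whenever $\Delta'>0$). Since the $h'_{ij}$ are no longer projectors, I would re-run the expansion of $H'^2$ leading to Eq.~\eqref{e:h2inequality}, which requires two ingredients: a normalisation bound $h'^2_{ij}\ge h'_{ij}$ (so that the diagonal part $\sum_{\langle j,k\rangle} h'^2_{jk}$ dominates $H'$), and the pairwise anticommutator bound $h'_{ic}h'_{cj}+h'_{cj}h'_{ic}\ge -\frac{1-\Delta'}{2(r-1)}(h'_{ic}+h'_{cj})$ at every shared site $c$. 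Writing $\epsilon:=\frac{1-\Delta}{2(r-1)}$, the target constant is $\epsilon':=\frac{1-\Delta'}{2(r-1)}=\mu\epsilon+\mu(\mu-1)$, whose two summands I expect to come from two distinct sources.

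The normalisation bound is straightforward. Setting $L_{ij}:=\Lambda_i\otimes\Lambda_j$, the assumption $\Lambda_i\ge I$ gives $L_{ij}\ge I$ and hence $L_{ij}^2\ge I$. Using that $h_{ij}$ is a projector, $h'^2_{ij}=L_{ij}h_{ij}L_{ij}^2 h_{ij}L_{ij}\ge L_{ij}h_{ij}^2 L_{ij}=L_{ij}h_{ij}L_{ij}=h'_{ij}$, as required.

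For the anticommutator bound I would reduce to a vector inequality. Fix a shared site $c$ and an arbitrary state $|\phi\rangle$, and observe that $L_{ic}$ and $L_{cj}$ commute (they agree as $\Lambda_c$ on the only shared site), with commuting product $M:=L_{ic}L_{cj}=\Lambda_i\otimes\Lambda_c^2\otimes\Lambda_j$ satisfying $I\le M\le\mu^2 I$. Setting $|a\rangle:=h_{ic}L_{ic}|\phi\rangle$ and $|b\rangle:=h_{cj}L_{cj}|\phi\rangle$, and using $\langle\phi|h'_{ic}|\phi\rangle=\|a\|^2$, $\langle\phi|h'_{cj}|\phi\rangle=\|b\|^2$, the desired bound becomes $2\,\mathrm{Re}\langle a|M|b\rangle\ge -\epsilon'(\|a\|^2+\|b\|^2)$. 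I would then split $M=I+(M-I)$ with $0\le M-I\le(\mu^2-1)I$: the term $2\,\mathrm{Re}\langle a|b\rangle$ is to be controlled using the undeformed hypothesis Eq.~\eqref{e:sufficient_condition}, contributing the $\mu\epsilon$ part of $\epsilon'$, while $2\,\mathrm{Re}\langle a|(M-I)|b\rangle$ is to be bounded by a Cauchy--Schwarz estimate, contributing the deformation error $\mu(\mu-1)$.

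The hard part will be carrying out this last split tightly enough. Two obstacles must be handled. First, $|a\rangle$ and $|b\rangle$ are built from two different vectors $L_{ic}|\phi\rangle$ and $L_{cj}|\phi\rangle$, so the undeformed condition---which compares $h_{ic}$ and $h_{cj}$ acting on a common vector---cannot be applied verbatim, and the mismatch $(L_{ic}-L_{cj})|\phi\rangle$ must be absorbed into the error. Second, and more seriously, bounding the two pieces independently is too lossy: a naive Cauchy--Schwarz on $2\,\mathrm{Re}\langle a|(M-I)|b\rangle$ alone already produces a coefficient $(\mu^2-1)=(\mu-1)(\mu+1)$, which overshoots the target $\mu(\mu-1)$. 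I therefore expect the two estimates to be combined, exploiting both that $|a\rangle,|b\rangle$ lie in the ranges of the projectors $h_{ic},h_{cj}$ and that $I\le\Lambda_i^2\le\mu I$, so that the slack in the undeformed inequality offsets part of the deformation error and the coefficients collapse exactly to $\mu\epsilon+\mu(\mu-1)=\epsilon'$. Once the anticommutator bound holds with this $\epsilon'$, the reduction of the first paragraph gives $H'^2\ge\Delta' H'$, and hence $H'$ is gapped whenever $\Delta'>0$, completing the proof.
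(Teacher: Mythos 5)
Your reduction to the local anticommutator condition, the verification that $h'^2_{ij}\ge h'_{ij}$, and the identification of the target constant $\epsilon'=\mu\epsilon+\mu(\mu-1)$ are all correct, but the argument stops exactly where the lemma's content begins: the deformed anticommutator bound is never actually established. You name the two obstacles accurately --- the undeformed condition \eqref{e:sufficient_condition} compares $h_{ic}$ and $h_{cj}$ acting on a \emph{common} vector, whereas your $|a\rangle=h_{ic}L_{ic}|\phi\rangle$ and $|b\rangle=h_{cj}L_{cj}|\phi\rangle$ are built from the distinct vectors $L_{ic}|\phi\rangle$ and $L_{cj}|\phi\rangle$; and independent Cauchy--Schwarz on the $(M-I)$ piece costs $\mu^2-1$ rather than $\mu(\mu-1)$ --- but you then only assert that the two estimates are ``to be combined'' so that ``the coefficients collapse exactly.'' That combination \emph{is} the lemma, and your parametrization makes it genuinely awkward to carry out, since there is in general no single vector $\psi$ with $h_{ic}\psi=|a\rangle$ and $h_{cj}\psi=|b\rangle$ to feed into \eqref{e:sufficient_condition}.

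The paper's proof resolves both obstacles simultaneously by staying at the operator level and conjugating the claimed three-site inequality by $\Lambda_i^{-1}\otimes\Lambda_c^{-1}\otimes\Lambda_j^{-1}$ (a congruence, so positivity is preserved). After this change of variables the cross terms become $h_{ic}\Lambda_c^2 h_{cj}+h_{cj}\Lambda_c^2 h_{ic}$ --- both projectors now act on the same vector, so writing $\Lambda_c^2=I+\Theta_c$ with $0\le\Theta_c\le(\mu-1)I$ lets the undeformed condition apply verbatim to the $I$ part --- while the linear terms become $h_{ic}\otimes\Lambda_j^{-2}+\Lambda_i^{-2}\otimes h_{cj}\ge\mu^{-1}\left(h_{ic}+h_{cj}\right)$, which is the sole source of the overall factor $\mu$ in \eqref{e:deformed_delta}. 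The leftover budget is then absorbed not by Cauchy--Schwarz but by completing a square: since the $h$'s are projectors, $(\mu-1)(h_{ic}+h_{cj})=h_{ic}(\mu-1)h_{ic}+h_{cj}(\mu-1)h_{cj}\ge h_{ic}\Theta_c h_{ic}+h_{cj}\Theta_c h_{cj}$, and adding the residual cross terms gives $(h_{ic}+h_{cj})\Theta_c(h_{ic}+h_{cj})\ge 0$. This is exactly the mechanism that avoids the $\mu+1$ overshoot you worried about. Without some such device --- most naturally, the congruence that puts both projectors on a common vector and localizes the deformation to the shared site --- your outline does not close.
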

\begin{proof}
Note that $\Delta'\le \Delta$ for all $\mu\ge 1$. We claim that, 
\begin{equation}
    h_{ij}'h_{jk}'+h_{jk}'h_{ij}'+ \frac{1-\Delta'}{2(r-1)}(h_{ij}'+h_{jk}')\ge0\,.
    \label{e:deformed_gap}
\end{equation}
Therefore if there exists $\mu$ such that $0<\Delta'$, the deformed Hamiltonian will itself satisfy Eq.~\eqref{e:sufficient_condition} and be gapped. We define $\Theta=\Lambda^2-1\ge0$, which satisfies $(\mu-1)I\ge \Theta$. To prove Eq.~\eqref{e:deformed_gap}, multiply the left hand side on the left and right by  $\Lambda_i\iv\otimes\Lambda\iv_j\otimes\Lambda\iv_k$ and and simplify as follows
\begin{align}
    &h_{ij}\Lambda_j^{2}h_{jk}+h_{jk}\Lambda_j^{2} h_{ij}+\frac{1-\Delta'}{2(r-1)}(h_{ij}\otimes \Lambda_k^{-2}+\Lambda_i^{-2}\otimes h_{jk})\,,\notag\\
    &\ge h_{ij}\Lambda_j^{2}h_{jk}+h_{jk}\Lambda_j^{2} h_{ij}+\frac{1-\Delta'}{2\mu(r-1)}(h_{ij}\otimes I + I\otimes h_{jk})\,.\notag\\
    &=h_{ij}\Lambda_j^{2}h_{jk}+h_{jk}\Lambda_j^{2} h_{ij}\notag\\
    &\quad+\fr{2(r-1)}\left[1-\Delta+2(r-1)(\mu-1)\right](h_{ij}\otimes I + I\otimes h_{jk})\,,\notag\\
    &=h_{ij}\Theta_j h_{jk}+h_{jk}\Theta_j h_{ij}\notag\\
    &\quad+\left[h_{ij} h_{jk}+h_{jk}h_{ij}+\frac{1-\Delta}{2(r-1)}(h_{ij} + h_{jk})\right]\notag\\
    &\quad+(\mu-1)(h_{ij} + h_{jk})\,,\notag\\
    &\ge h_{ij}\Theta_j h_{jk}+h_{jk}\Theta_j h_{ij}+(\mu-1)(h_{ij} +  h_{jk})\,,\notag\\
    &= h_{ij}\Theta_j h_{jk}+h_{jk}\Theta_j h_{ij}+h_{ij}(\mu-1)h_{ij}+h_{jk}(\mu-1)h_{jk} \,,\notag\\
    &\ge h_{ij}\Theta_j h_{jk}+h_{jk}\Theta_j h_{ij}+h_{ij}\Theta_j h_{ij}+h_{jk}\Theta_j h_{jk}\,,\notag\\
    &=(h_{ij}+h_{jk})\Theta_j (h_{ij}+h_{jk})\ge 0\,.\notag
\end{align}
By multiplying left and right by $\Lambda_i\otimes\Lambda_j\otimes\Lambda_k$, we obtain Eq.~\eqref{e:deformed_gap}. 
\end{proof}
If the terms $h_{ij}$ are commuting, then Eq.~\eqref{e:sufficient_condition} will clearly hold for any $0<\Delta\le 1$ as the left-hand side will be positive, while the right-hand side will be negative. In this case, we have the following:
 
\begin{corollary}
    \label{t:commuting_gap}
    Let $H$ and $H'$ be Hamiltonians defined as in Lemma~\ref{t:stab_gap} with the additional property that interaction terms $h_{ij}$ commute. Then $H'$ is gapped if  
    \begin{equation}
        \label{e:gap_inequality}
        \mu < \fr{2} + \fr{2} \sqrt{\frac{r+1}{r-1}}\,.
    \end{equation}
\end{corollary}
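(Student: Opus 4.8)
The plan is to reduce Corollary~\ref{t:commuting_gap} to Lemma~\ref{t:stab_gap} by exploiting the fact that, in the commuting case, the undeformed Hamiltonian $H$ satisfies the gap condition \eqref{e:sufficient_condition} with the strongest possible choice of the parameter $\Delta$, and then to convert the requirement $\Delta'>0$ into an explicit bound on $\mu$ by solving a quadratic.

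First I would record why the commuting assumption allows us to take $\Delta=1$ in Lemma~\ref{t:stab_gap}. When $h_{ic}$ and $h_{cj}$ commute, the left-hand side of \eqref{e:sufficient_condition} equals $2h_{ic}h_{cj}$; since the product of two commuting projectors is itself the projector onto the intersection of their ranges, this operator is positive semidefinite. For any $0<\Delta\le 1$ the right-hand side $-\frac{1-\Delta}{2(r-1)}(h_{ic}+h_{cj})$ is negative semidefinite, so \eqref{e:sufficient_condition} holds; in particular it holds at the extreme value $\Delta=1$, which is the choice that makes the deformed parameter $\Delta'$ as large as possible.

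Next I would substitute $\Delta=1$ into the definition \eqref{e:deformed_delta}. The factor $1-\Delta$ then vanishes, leaving $\Delta'=1-2(r-1)\mu(\mu-1)$. By Lemma~\ref{t:stab_gap}, $H'$ is gapped as soon as $\Delta'>0$, so the task reduces to solving $2(r-1)\mu(\mu-1)<1$, equivalently $2(r-1)\mu^2-2(r-1)\mu-1<0$.

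Finally I would treat this as a quadratic in $\mu$. Its roots are $\frac{1}{2}\pm\frac{1}{2}\sqrt{\frac{r+1}{r-1}}$, so the inequality holds strictly between them; since $\mu\ge 1$ while the lower root is negative, the only effective constraint is $\mu<\frac{1}{2}+\frac{1}{2}\sqrt{\frac{r+1}{r-1}}$, which is exactly \eqref{e:gap_inequality}. As a consistency check, the upper root exceeds $1$ for every $r\ge 2$ (because $\frac{r+1}{r-1}>1$), so the admissible interval of values of $\mu$ is always nonempty and compatible with the standing hypothesis $\mu\ge 1$. I do not expect a genuine obstacle here: once the observation $\Delta=1$ is in hand, the argument is a one-line substitution followed by the quadratic formula, and the only points needing care are selecting the correct root and verifying that this choice is consistent with $\mu\ge 1$.
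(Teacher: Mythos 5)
Your proposal is correct and follows essentially the same route as the paper: observe that commutativity lets one take $\Delta=1$ in the gap condition \eqref{e:sufficient_condition}, substitute into \eqref{e:deformed_delta}, and solve the resulting quadratic $2(r-1)\mu(\mu-1)<1$ for $\mu\ge 1$. Your extra remarks (that $2h_{ic}h_{cj}$ is the projector onto the intersection of ranges, and that the upper root exceeds $1$ for all $r\ge 2$) are correct refinements of what the paper states more tersely.
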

\begin{proof}
    As $H$ is commuting, we can set $\Delta=1$ in Eq.~\eqref{e:deformed_delta}. Equation~\eqref{e:gap_inequality} then follows by solving $0<\Delta'$ for $\mu$ and restricting to $\mu\ge1$.
\end{proof}

We will now apply this result to our specific Hamiltonian. Consider the map $\hat{B}^{(R_a)}(\delta)$ from virtual to physical degrees of freedom on an injective region $R_a$. We can polar decompose $\hat{B}^{(R_a)}(\delta)=Q_a(\delta) W_a(\delta)$ where $W_a(\delta)$ is an isometry onto the image of $\hat{B}^{(R_a)}(\delta)$ and $Q_a(\delta)$ is an invertible positive operator that maps the image of $W_a(\delta)$ onto itself. As the action of $Q_a(\delta)$ on $({\rm img }\,W_a(\delta))^\perp$ is arbitrary, we assume that $Q_a(\delta)$ is a square matrix that acts only the ${\rm img}\,W_a(\delta)$. 

Let $\Lambda_a(\delta)=\gamma(\delta)^{-1} Q_a(\delta)\iv$, where $\gamma(\delta)$ is the smallest eigenvalue of $Q(\delta)\iv$ (which we have included to ensure that $\Lambda_a(\delta)\ge1$ as is required in Lemma \ref{t:stab_gap}).  Consider the block tensor $\hat{B}^{(a,b)}(\delta)$ for two neighbouring injective regions $R_a$ and $R_b$ and let $h_{a,b}(\delta)$ be the projection onto the orthogonal complement of the image of $(\Lambda_a(\delta)\otimes\Lambda_b(\delta)) \hat{B}^{(a,b)}(\delta)$. The map $\Lambda_a(\delta) \hat{B}^{(a)}(\delta)\propto W_a(\delta)$ is proportional to an isometry and therefore is proportional to $\hat{B}^{(a)}(0)$ up to left-multiplication by a unitary. Therefore, the map $(\Lambda_a(\delta)\otimes\Lambda_b(\delta)) \hat{B}^{(a,b)}(\delta)$ is equivalent to $\hat{B}^{(R_a\cup R_b)}(0)$ up to a product of two unitaries, one acting on $R_a$ and one acting on $R_b$. As the interaction terms $\Pi^{(a,b)}(\delta)$ (the projectors onto the orthogonal complement of the image of $\hat{B}^{(R_a\cup R_b)}(0)$) are pairwise commuting,  as we showed in the proof of Lemma \ref{t:block_ham_gap}, so are $h_{a,b}(\delta)$. Thus the Hamiltonian $\sum_{\langle a,b\rangle}h_{a,b}(\delta)$ is commuting and gapped, with a system-size independent gap of $\Delta=1$. 

Now consider the terms 
\begin{equation}
    h_{a,b}'(\delta)=\left(\Lambda_a(\delta)\otimes\Lambda_b(\delta)\right) h_{a,b}(\delta)\left(\Lambda_a(\delta)\otimes\Lambda_b(\delta)\right)\,,
\end{equation}
and the Hamiltonian $\sum_{\langle a,b\rangle} h_{a,b}'(\delta)$. We see that $h_{a,b}(\delta)$ and $h_{a,b}'(\delta)$ have the properties required of $h_{ij}$ and $h_{ij}'$ in Corollary \eqref{t:commuting_gap}. We also have that the kernel of $h_{a,b}(\delta)$ coincides with the kernel of the projector $\Pi^{(a,b)}(\delta)$ (i.e., the image of $\hat{B}^{(a,b)}(\delta)$). Thus for each edge $(a,b)$ we have that $h'_{a,b}(\delta)\le \eta(\delta)\Pi^{(a,b)}(\delta)$, where $\eta(\delta)$ is the largest eigenvalue of $h'_{a,b}(\delta)$. Therefore, $H^B(\delta)=\sum_{\langle a,b \rangle}\Pi^{(a,b)}(\delta)$ is gapped for any $\delta$ where $\sum_{\langle a,b\rangle} h_{a,b}'(\delta)$ is gapped. From Corollary \ref{t:commuting_gap} we thus conclude that $H^B(\delta)$ is gapped when the largest eigenvalue $\mu(\delta)$ of $\Lambda_a(\delta)$ satisfies the inequality Eq. \eqref{e:gap_inequality}. Note that from the definition of $\Lambda_a(\delta)$ that $\mu(\delta)\iv$ can be computed as the smallest non-zero eigenvalue of $(\hat{B}^{(R_a)}(\delta))^\dag\hat{B}^{(R_a)}(\delta)$, which can be expressed as a simple tensor contraction.

\bibliographystyle{apsrev4-1}

\end{document}